\documentclass[11pt]{article}
\usepackage[margin=1.05in]{geometry}
\usepackage{amsmath,amssymb,amsthm,mathtools}
\usepackage{microtype}
\usepackage{tikz}
\usepackage[hidelinks]{hyperref}
\newtheorem{theorem}{Theorem}[section]
\newtheorem{proposition}[theorem]{Proposition}
\newtheorem{lemma}[theorem]{Lemma}
\newtheorem{corollary}[theorem]{Corollary}
\theoremstyle{remark}
\newcommand{\E}{\mathbb E}
\newcommand{\Prob}{\mathbb P}
\newcommand{\R}{\mathbb R}
\newcommand{\Z}{\mathbb Z}
\newcommand{\Var}{\operatorname{Var}}

\newcommand{\Conf}{\mathcal N}
\newcommand{\TV}{\operatorname{TV}}
\title{Uniform displacement bounds and Gibbs limits\\for periodic one-dimensional Riesz gases}
\author{Yan Ru Pei}
\date{September 14, 2026}
\begin{document}
\maketitle
\begin{abstract}
For the neutral periodic one-dimensional Riesz gas with pair potential
locally $-|x|^a$, $0<a<1$, we prove a particle-displacement variance
bound of order $\beta^{-1}$, uniformly in the number of particles.
Log-concavity also gives exponential displacement tails.
Every stationary periodic thermodynamic limit is simple, has intensity
one, and admits a stationary ordered matching to the unit lattice with
the same bounds. Each limit satisfies the canonical Gibbs equations
for the full-line interaction, with an ordinary symmetric spatial
principal value for the exterior potential.
The matching implies uniformly bounded interval number variance and
a positive limiting second moment of the reciprocal-lattice Fourier
average at sufficiently low temperature.
The main estimate compares the inverse random Hessian, through
deterministic electrical flows, to a transient long-range network.
\end{abstract}

\section{Introduction and main results}

The strength of positional order in a one-dimensional particle system
depends on the range of its interaction. For the neutral Coulomb plasma
with pair potential $-|x|$, ordered coordinates reduce the energy to a quadratic form;
the Brascamp--Lieb approach to the one-dimensional plasma uses this
structure to control particle displacements~\cite{BL1975}.
The negative Riesz potentials $-|x|^a$, $0<a<1$, interpolate between
logarithmic and Coulomb behavior after additive and multiplicative
normalization: $-(|x|^a-1)/a\to-\log|x|$ as $a\downarrow0$ for $x\ne0$.
They are convex away from the
collision point, but their ordered Hamiltonians have a Hessian that
depends on every particle separation. Obtaining displacement bounds
uniformly in the volume requires control of this random Hessian.

We prove such bounds for the unit-density periodic gas at every
positive inverse temperature. The proof uses the Hessian as the
Laplacian of an electrical network. Log-concavity controls expected
reciprocal conductances. A deterministic test flow then bounds the
expected effective resistance by that of the network with conductances
$|i-j|^{a-2}$. Its low-frequency eigenvalues have order
$|\theta|^{1-a}$, whose reciprocal is integrable precisely when $a>0$.
Combining this estimate with the Brascamp--Lieb variance inequality
gives the volume-independent displacement bound.

The periodic negative-Riesz gas has been studied numerically by
Lelotte~\cite{Lelotte}. Convexity and Brascamp--Lieb estimates also
play an important role in Boursier's work on circular Riesz gases
with positive exponent~\cite{Boursier}. Dereudre and
Digneaux~\cite{DD} study positional rigidity through stationary
transport to a lattice. Their negative-Riesz application uses
stationarized interval-background ensembles. Our results concern
periodic approximants; equality of the resulting infinite-volume
ensembles is a separate question. The periodic/container comparison
in~\cite{Lewin} does not identify them in the present exponent range.

The uniform estimate has two infinite-volume consequences. First,
every periodic limit admits an ordered lattice matching with square
integrable displacements. This yields bounded interval number
variance, a form of positional order related to the bounded-fluctuation
principle of Aizenman, Goldstein and Lebowitz~\cite{AGL}.
Second, the exponential displacement tails make the exterior
relative potential converge along every symmetric spatial cutoff.
A uniform comparison of the finite-circle and full-line potentials
then passes the actual circle conditional laws to canonical Gibbs
equations on the line.

\subsection{The periodic ensemble}

Fix $a\in(0,1)$ and write $\mathbb T_N=\R/(N\Z)$, where
$N\ge1$ is an integer. Define the even, continuous, zero-mean kernel
\begin{equation}\label{eq:kernel}
 g_N(r)=c_aN^a\sum_{k\in\Z\setminus\{0\}}
                  \frac{e^{2\pi ikr/N}}{|k|^{1+a}},
 \qquad
 c_a=\frac{2\Gamma(1+a)\sin(\pi a/2)}{(2\pi)^{1+a}}.
\end{equation}
The series is absolutely convergent. Its equivalent image formula is
\begin{equation}\label{eq:images}
 g_N(r)-g_N(0)=-|r|^a-
  \sum_{m\ge1}
  \bigl(|r+mN|^a+|r-mN|^a-2(mN)^a\bigr).
\end{equation}
The image series converges uniformly on compact sets; its
normalization will also be verified below. In particular the local
singular term is $-|r|^a$.

For $\beta>0$, let the law of $N$ labeled points on $\mathbb T_N$
have density
\begin{equation}\label{eq:ensemble}
 \frac1{Z_{N,\beta}}
 \exp\left\{-\beta\sum_{1\le i<j\le N}g_N(x_i-x_j)\right\}
 \prod_{i=1}^N dx_i.
\end{equation}
The normalizer is finite and positive because $g_N$ is continuous
on the compact circle. Collision sets have zero Lebesgue measure.
The induced point configuration does not depend on the labeling.
Omitting the zero Fourier mode is equivalent, up to an additive
energy constant, to including a uniform neutralizing background of
density one: $\int_{\mathbb T_N}g_N(x-y)\,dy=0$ for every $x$.

Choose one of the $N$ particles uniformly, translate it to the origin,
and order the remaining particles cyclically as
$0=Y_0<Y_1<\cdots<Y_{N-1}<N$.
We call this the particle-Palm law. It has density proportional to
$e^{-\beta H_N(Y)}$ on the ordered simplex, where
\begin{equation}\label{eq:palm-energy}
 H_N(Y)=\sum_{0\le i<j<N}g_N(Y_j-Y_i).
\end{equation}
Indeed, take the chosen labeled particle as translation coordinate
in~\eqref{eq:ensemble}. The coordinate change has Jacobian one;
the integral over the translation gives a constant factor $N$, and
ordering the other coordinates gives a constant factor $(N-1)!$.
Thus no gap-length weight occurs in this rooting convention.

\begin{theorem}[Uniform particle displacements]\label{thm:finite}
For each $a\in(0,1)$ there is $0<C_a<\infty$ such that, for every
$N\ge2$, $\beta>0$, and $0\le i<N$,
\begin{equation}\label{eq:finite}
 \E Y_i=i,\qquad \E|Y_i-i|^2\le C_a/\beta.
\end{equation}
Set $\sigma=(C_a/\beta)^{1/2}$ and $\kappa=\log(3)/4$.
For every $t\ge0$,
\begin{equation}\label{eq:exp}
 \Prob(|Y_i-i|>t)\le2e^{-\kappa t/\sigma},\qquad
 \E e^{\kappa|Y_i-i|/(2\sigma)}\le3.
\end{equation}
\end{theorem}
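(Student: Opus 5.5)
We work throughout on the particle-Palm law with density proportional to $e^{-\beta H_N(Y)}$ on the ordered simplex. The plan is to establish symmetry and log-concavity, bound the variance through Brascamp--Lieb and an electrical-network comparison, and then derive the tails from log-concavity.

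\emph{Step 1: the mean.} Extend the ordered configuration periodically by $Y_{i+kN}=Y_i+kN$ and use the gap coordinates $G_k=Y_{k+1}-Y_k$ for $0\le k<N$, so that $\sum_k G_k=N$. Re-rooting at a different particle preserves the particle-Palm law. Under re-rooting, the gap vector is cyclically shifted, so it is exchangeable under cyclic shifts. Each gap therefore has mean one, and $\E Y_i=\sum_{k<i}\E G_k=i$.

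\emph{Step 2: convexity and the Hessian.} On the open ordered simplex every difference $Y_j-Y_i$ lies in $(0,N)$. The image formula~\eqref{eq:images} shows that on $(0,N)$ the function $g_N$ is $-r^a$ plus terms of the form $-(|r+mN|^a+|r-mN|^a-2(mN)^a)$. Each of these terms is convex on $(-N,N)$, since $-|x|^a$ is convex away from $0$. Hence $g_N''>0$ on $(0,N)$, and $H_N$ is convex on the simplex. The Hessian equals $\sum_{i<j}w_{ij}(e_i-e_j)(e_i-e_j)^{\top}$ with conductances $w_{ij}=g_N''(Y_j-Y_i)$. This is a weighted graph Laplacian in the variables $Y_1,\dots,Y_{N-1}$, with node $0$ grounded at $Y_0=0$. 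The density is log-concave on a convex domain, and the boundary is harmless because the domain is convex. Brascamp--Lieb then gives
\[
\Var(Y_i)\le \beta^{-1}\,\E\bigl[(\nabla^2H_N)^{-1}_{ii}\bigr]=\beta^{-1}\,\E R^{\mathrm{eff}}_w(0\leftrightarrow i),
\]
where $R^{\mathrm{eff}}_w$ is the effective resistance of the random network $w$.

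\emph{Step 3: the network comparison (main obstacle).} Thomson's principle bounds the effective resistance by the energy of any unit flow $\theta$ from $0$ to $i$: $R^{\mathrm{eff}}_w\le\sum_{e}\theta_e^2/w_e$. We choose $\theta$ deterministic, equal to the optimal unit flow of the deterministic network with conductances $c_{jk}=d_N(j,k)^{a-2}$, where $d_N$ is the cyclic index distance. Taking expectations gives
\[
\E R^{\mathrm{eff}}_w\le\sum_e\theta_e^2\,\E[1/w_e].
\]
It remains to show $\E[1/w_{jk}]\le C\,d_N(j,k)^{2-a}$. From the image formula, $g_N''(r)\ge a(1-a)\,r^{a-2}$ for $r\in(0,N)$. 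Using the symmetric representative $r\mapsto N-r$, the bound becomes $g_N''(r)\gtrsim \min(r,N-r)^{a-2}$. Therefore
\[
1/w_{jk}\lesssim \min\bigl(Y_k-Y_j,\;N-(Y_k-Y_j)\bigr)^{2-a}.
\]
Since $2-a\le 2$, Jensen's inequality gives
\[
\E\,|Y_k-Y_j|^{2-a}\le\bigl(\E|Y_k-Y_j|^2\bigr)^{(2-a)/2}.
\]
This second moment equals $(k-j)^2+\Var(Y_k-Y_j)$, which is circular. To break the circularity we use log-concavity of the sums of gaps. A sum of $m$ gaps is log-concave with mean $m$ and is nonnegative, so its moments of order at most $2$ are bounded by a constant times $m^{2-a}$, because a nonnegative log-concave variable has standard deviation at most its mean. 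This is the step requiring care, and it is where I expect the main difficulty. Granting it, we obtain $\E R^{\mathrm{eff}}_w\le C\,R^{\mathrm{eff}}_c(0\leftrightarrow i)$. Fourier analysis on $\Z/N\Z$ then bounds $R^{\mathrm{eff}}_c$: its Laplacian has eigenvalues $\lambda(\theta)\asymp|\theta|^{1-a}$, and
\[
R^{\mathrm{eff}}_c\le\frac{2}{N}\sum_{\theta\ne0}\lambda(\theta)^{-1}\le C\int_{-\pi}^{\pi}|\theta|^{a-1}\,d\theta<\infty
\]
uniformly in $N$, since $a>0$. Together with Step 2 this yields~\eqref{eq:finite}.

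\emph{Step 4: the tails.} $Y_i$ is a linear functional of a log-concave vector, so it is a one-dimensional log-concave variable, with standard deviation at most $\sigma$. Standard one-dimensional log-concave estimates apply. By Chebyshev, the median lies within $2\sigma$ of the mean. An interval of length $4\sigma$ around the mean carries mass at least $3/4$, and log-concavity of the tail function then forces geometric decay by a factor $1/3$ per step of length $4\sigma$. This is the origin of $\kappa=\log 3/4$ in the estimate $\Prob(|Y_i-i|>t)\le2e^{-\kappa t/\sigma}$. Integrating this tail gives $\E e^{\kappa|Y_i-i|/(2\sigma)}\le 1+\int_0^\infty \frac{\kappa}{2\sigma}\,2e^{-\kappa t/(2\sigma)}\,dt=3$, which completes~\eqref{eq:exp}.
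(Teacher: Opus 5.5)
Your proposal is correct and follows essentially the same route as the paper. Both use cyclic gap symmetry for the mean, and both apply Brascamp--Lieb with the Hessian read as a grounded network Laplacian. The variance is then bounded by testing the random network with the deterministic optimal flow of the $d_N(j,k)^{a-2}$ network, using the arc-moment bound $\E R^{2-a}\le 2^{(2-a)/2}k^{2-a}$ and the Fourier estimate $\lambda_m\gtrsim(m/N)^{1-a}$. The tails come from the same Chebyshev-plus-log-concave-survival argument.

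There are two places where the paper fills in what you leave informal. The fact you grant, $\E X^2\le 2(\E X)^2$ for nonnegative log-concave $X$, is proved there in two lines from $S(x+y)\le S(x)S(y)$. Your remark that ``the boundary is harmless'' is made rigorous by applying the convex-domain Brascamp--Lieb inequality on the truncated chambers $\{D_i\ge u\}$, where $H_N$ is smooth, and then letting $u\downarrow0$.
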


\subsection{Thermodynamic limits and canonical Gibbs equations}

Periodically extend the unrooted configurations with
law~\eqref{eq:ensemble} to $\R$, and denote their laws by $\mu_N$.
They are stationary and have intensity one. We use the vague
topology on the space $\Conf(\R)$ of locally finite counting measures,
allowing multiplicities in the ambient space. A stationary matching
means that the joint law of the particle configuration, the lattice,
and their matching segments is invariant under real translations.

\begin{theorem}[Ordered lattice matching]\label{thm:limit}
Fix $a\in(0,1)$ and $\beta>0$. The family $(\mu_N)$ is relatively
compact. Every weak accumulation point along $N\to\infty$ is a
simple stationary process $\xi$ of intensity one that admits a
representation
\begin{equation}\label{eq:matching}
 \xi=\sum_{j\in\Z}\delta_{j+U+q_j}.
\end{equation}
Here $U$ is uniform on $[0,1)$ and independent of the
$\Z$-stationary sequence $(q_j)$,
$1+q_{j+1}-q_j>0$ almost surely for every $j$, and
\begin{equation}\label{eq:limit-moments}
 \E q_0^2\le C_a/\beta,\qquad
 \Prob(|q_0|>t)\le2e^{-\kappa t/\sigma},\qquad
 \E e^{\kappa|q_0|/(2\sigma)}\le3.
\end{equation}
The correspondence $j+U\mapsto j+U+q_j$ is a stationary matching.
\end{theorem}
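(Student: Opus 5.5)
We build the matching already at finite $N$ and pass it to the limit together with the configuration, rather than constructing it after the limit. For the periodic configuration with law $\mu_N$, take the particle-Palm ordering $0=Y_0<\dots<Y_{N-1}$ from the chosen root $x_0$ and set $Y_{i+kN}=Y_i+kN$. Draw an independent uniform $U\in[0,1)$ and define the marked configuration
\[
 \xi_N=\sum_{j\in\Z}\delta_{j+U+q^N_j},\qquad q^N_j=Y_j-j+(x_0-U),
\]
with $x_0$ placed on $[0,N)$ by unfolding the circle. More cleanly, take $x_0$ uniform on $[0,N)$ independent of the Palm gaps, and write $x_0=J+U$ with $J\in\{0,\dots,N-1\}$. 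Relabel the index $j\mapsto j-J$; the re-indexed sequence $q^N_j=Y_{j+J}-(j+J)$ is $\Z$-stationary, because the Palm law is invariant under cyclic re-rooting, which shifts the index. It is also independent of $U$. By Theorem~\ref{thm:finite}, each $q^N_j$ satisfies $\E(q^N_j)^2\le C_a/\beta$ together with the exponential tail bounds of~\eqref{eq:exp}, uniformly in $N$ and $j$. Ordering gives $1+q^N_{j+1}-q^N_j=Y_{j+1}-Y_j\ge0$, with strict inequality almost surely. Stationarity under real translations holds because a shift by $s$ changes $U$ to $U+s \bmod 1$ and shifts the index by $\lfloor U+s\rfloor$.

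Compactness follows from tightness of the sequences. The sequence $(U,(q^N_j)_{j\in\Z})$ is tight in $[0,1]\times\R^\Z$ with the product topology, since each coordinate has bounded second moment. Take any subsequence along which $\mu_N$ converges weakly. Extract a further subsequence along which the joint law of $(U,(q^N_j))$ converges to some $(U,(q_j))$. The limit is $\Z$-stationary and $U$ is independent of $(q_j)$, because both properties are closed under weak limits. Fatou and the portmanteau theorem on closed sets $\{|q_0|\ge t\}$ give the second-moment and tail bounds. Lower semicontinuity gives the exponential-moment bound. This yields~\eqref{eq:limit-moments}, with the strict tail inequality passed through $t'\downarrow t$.

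The map $(u,q)\mapsto\sum_j\delta_{j+u+q_j}$ is continuous into vague topology on the set where $|q_j|$ grows sublinearly. The uniform tails give $\Prob(|q_j|>|j|/2)\le 2e^{-\kappa|j|/(2\sigma)}$, which is summable. Hence only finitely many atoms near any compact set can come from far indices, uniformly in $N$, and a truncation argument identifies the limit of $\mu_N$ with the law of $\xi$ in~\eqref{eq:matching}. This proves relative compactness and the matching representation. Intensity one follows from $\xi([0,1))$ having expectation one via stationarity and the representation: each index contributes $\Prob(j+U+q_j\in[0,1))$, and these sum to one by Fubini over $U$.

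The main obstacle is simplicity, which is the same as strict inequality $1+q_{j+1}-q_j>0$ in the limit. Weak limits only preserve $\ge0$, so simplicity does not follow from the moment bounds alone. The plan is a uniform small-gap estimate for the Palm law: $\Prob(Y_1<\varepsilon)\to0$ as $\varepsilon\to0$, uniformly in $N$. First try the log-concavity of the gap marginal. The Palm density is log-concave on the ordered simplex, because the Hessian is a Laplacian with positive conductances $a(1-a)|r|^{a-2}$ minus bounded smooth corrections. A more robust route is a direct comparison: near collision the density contains the factor $e^{\beta|Y_1|^a}$, which is bounded. The rest of the energy is Lipschitz in $Y_1$ on the event that neighbouring gaps are of order one, and this event has high probability by Theorem~\ref{thm:finite}. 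A local change of variables moving $Y_1$ then gives $\Prob(Y_1<\varepsilon)\le C\varepsilon$ uniformly in $N$. By portmanteau, $\Prob(1+q_1-q_0<\varepsilon)\le C\varepsilon$ in the limit, so gaps are positive almost surely and $\xi$ is simple.
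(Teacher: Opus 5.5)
Your finite-$N$ construction ($q^{(N)}_j=Y_{j+J}-(j+J)$ with independent uniform $J$ and $U$, so that the point set is the Palm configuration translated by $U-J$) matches the paper's proof. So do the product tightness, the truncation identification of the limit, the intensity computation and the phase-reindexing proof of stationarity. The gap is in the step you single out yourself: simplicity.

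You mention log-concavity of the gap marginal but never say how it rules out an atom at zero. The ``more robust'' comparison you rely on instead does not work as sketched. Apart from the pair with $Y_0$, the $Y_1$-dependence of the energy is $\sum_{j\ge2}g_N(Y_j-Y_1)$. Its derivative has $N-2$ terms of size $|Y_j-Y_1|^{a-1}$, whose absolute values add up to order $N^a$. Only cancellation between the two sides of the circle keeps the net force bounded. That cancellation depends on the whole displacement field, roughly through $\sum_j|q_j-q_1|\,|j|^{a-2}$. So there is no Lipschitz constant controlled by the neighbouring gaps, and a random unbounded force would enter your change of variables exponentially.

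Moreover, Theorem~\ref{thm:finite} controls displacements only on the scale $(C_a/\beta)^{1/2}$. For small $\beta$ it does not make the neighbouring gaps of order one with high probability. Requiring room to move $Y_1$ is also itself a small-gap statement about $Y_2-Y_1$, which has the same law as $Y_1$, so the argument is circular.

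The missing step is short if you use Lemma~\ref{lem:convexity} fully. Its log-concavity comes from $g_N''>0$ in~\eqref{eq:curvature}, not from ``positive conductances minus bounded corrections'', which would not give convexity. Every cyclic gap $D$ of the Palm law is log-concave with $\E D=1$ and $\E D^2\le 2$.

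The paper argues in the limit. The bound $\E D^2\le2$ gives uniform integrability, so the limiting gap has mean one. Weak limits of log-concave laws are log-concave, so the limit law is either a point mass, necessarily at $1$, or absolutely continuous. In neither case is there an atom at $0$.

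Your uniform small-gap estimate also follows directly. The function $S(t)=\Prob(D>t)$ is log-concave with $S(0)=1$, and Paley--Zygmund gives $S(1/2)\ge 1/8$. Concavity of $\log S$ on $[0,1/2]$ then yields $\Prob(D\le\varepsilon)\le 2\varepsilon\log 8$ uniformly in $N$. This passes to the limit through the open set $\{1+q_1-q_0<\varepsilon\}$.

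One smaller correction: for the tail bound in~\eqref{eq:limit-moments}, apply the portmanteau inequality to the open set $\{|q_0|>t\}$, as the paper does. Closed sets give lower bounds on limiting probabilities, not upper bounds.
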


Write $g(r)=-|r|^a$. For a bounded open interval $\Lambda$, put
$\eta=\xi|_{\Lambda^c}$ and $k=\xi(\Lambda)$, and choose
$x_*\in\Lambda$. The relative exterior potential is
\begin{equation}\label{eq:potential}
 W_\eta(x)=\lim_{L\to\infty}
       \sum_{\substack{y\in\eta\\|y|\le L}}
                [g(x-y)-g(x_*-y)].
\end{equation}
\begin{theorem}[Canonical Gibbs property]\label{thm:DLR}
For every limit in Theorem~\ref{thm:limit}, the
limit~\eqref{eq:potential} exists almost surely, uniformly for
$x\in\overline\Lambda$, as the real radius $L\to\infty$.
Conditionally on $(\eta,k)$, the inside points have density on
$\Lambda^k$, interpreted as an unlabeled configuration, proportional
to
\begin{equation}\label{eq:gibbs}
 \exp\left\{-\beta\left[
        -\sum_{i<j}|x_i-x_j|^a+\sum_{i=1}^kW_\eta(x_i)
                         \right]\right\}.
\end{equation}
Changing $x_*$ or the center of the symmetric cutoff does not
change this normalized density. The resulting number-preserving
kernels are consistent for nested intervals on their common full-measure
class of configurations with the stated principal values.
\end{theorem}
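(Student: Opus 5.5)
The plan is to start from the finite-$N$ Gibbs property on the circle and pass to the limit. For the circle ensemble $\mu_N$ (periodically extended), take $N$ large so that $\Lambda$ fits inside one period window. Conditionally on the configuration outside $\Lambda$ in one period and on the count $k$, the inside points have density on $\Lambda^k$ proportional to
\[
\exp\Bigl\{-\beta\Bigl[\sum_{i<j}g_N(x_i-x_j)+\sum_i W^N_\eta(x_i)\Bigr]\Bigr\},
\qquad
W^N_\eta(x)=\sum_{y\in\eta\cap\text{period}}[g_N(x-y)-g_N(x_*-y)].
\]
This follows directly from~\eqref{eq:ensemble}. I would state it in DLR integral form: for bounded continuous local test functions $F$,
\[
\E_{\mu_N}F=\E_{\mu_N}\int F\,d\gamma^N_{\Lambda,\eta,k}.
\]
Proving Theorem~\ref{thm:DLR} then reduces to three steps: (i) almost sure existence of the principal value~\eqref{eq:potential} under $\xi$, uniformly in $x\in\overline\Lambda$; (ii) convergence $W^N_\eta\to W_\eta$ in a form robust under weak convergence; (iii) passage to the limit in the DLR identity, together with the consistency and invariance claims.

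For (i), use the matching of Theorem~\ref{thm:limit}, which writes $y=j+U+q_j$. Taylor-expand $g(x-y)-g(x_*-y)$ in $y$ around the lattice point $j+U$. The lattice contributions $\sum_{|j|\le L}[g(x-j-U)-g(x_*-j-U)]$ converge along symmetric cutoffs: the odd leading term $\partial g\sim|j|^{a-1}\operatorname{sgn} j$ cancels in pairs, leaving summable $|j|^{a-2}$ terms. The displacement corrections are bounded by $C|q_j|\,|j|^{a-2}$, which is summable almost surely because $\E|q_0|<\infty$ and the sequence is stationary. Near the boundary of the symmetric window $[-L,L]$, a point can fall in or out because of its displacement $q_j$. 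Each such point contributes $O(L^{a-1})$. The exponential tails in~\eqref{eq:limit-moments} combined with Borel--Cantelli show that $|q_j|\le C\log|j|$ eventually, so the boundary error is $O(L^{a-1}\log L)\to0$. The estimates are uniform in $x$ by smoothness of $g$ away from $\Lambda$. Shifting the cutoff center changes the sum by boundary terms of the same size, and changing $x_*$ adds a constant, which the normalization absorbs.

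For (ii), the image formula~\eqref{eq:images} lets me compare $g_N$ with $g$. For $|r|\le R\ll N$, $g_N(r)-g_N(0)+|r|^a=O(R^2N^{a-2})$ together with its derivatives. The periodic sum $W^N_\eta$ over one period is then a symmetric sum with cutoff about $N/2$ plus an image correction. That correction involves the quadratic terms $\sum_m(\cdots)$ evaluated on a neutral configuration, and it is controlled by the same displacement bounds from Theorem~\ref{thm:finite} applied at finite $N$.

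The main obstacle is (iii). This needs a uniform-in-$N$ truncation estimate: $W^N_\eta$ must be replaced by a local truncation $W^{N,L}_\eta$ using only points with $|y|\le L$, with error that is small in probability uniformly in $N$. Granting that, the truncated kernel is a continuous local functional, so weak convergence applies, and the error vanishes as $L\to\infty$. I would get the uniform truncation estimate from the finite-$N$ matching with uniform exponential tails, using exactly the computation in (i) at the level of $\mu_N$.

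Two smaller points also need care. The denominator (the partition function in $\Lambda$) must stay bounded away from $0$; this holds because $W$ is bounded on $\overline\Lambda$ on events of high probability. Conditioning on $k$ requires a test-function formulation; the events $\{\xi(\Lambda)=k\}$ are continuity sets because the limit is simple and the boundary of $\Lambda$ is charged with probability zero.

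Consistency for nested $\Lambda\subset\Lambda'$ follows from the identity
\[
W^{\Lambda'}_{\eta'}+\text{interactions with }\Lambda'\setminus\Lambda=W^{\Lambda}_{\eta}+\text{const},
\]
which holds on the full-measure set where all the principal values exist.
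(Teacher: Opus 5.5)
Your step (i), the invariance under changes of $x_*$ and of the cutoff center, and the nested-consistency identity follow the paper essentially verbatim. This is Lemma~\ref{lem:spatial-pv} and Section~\ref{sec:intrinsic}: a logarithmic envelope from the exponential tails via Borel--Cantelli, a paired lattice tail of order $m^{a-1}$, and $O(\log L)$ boundary labels each costing $O(L^{a-1})$.

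Your passage from the circle to the line is organized differently. The paper never truncates spatially at finite $N$. It couples $(U_N,q^{(N)},\xi_N)\to(U,q,\xi)$ almost surely. It then uses the exact identity of Lemma~\ref{lem:lattice}, $\sum_{j\in I_N}[g_N(x-j-U)-g_N(x'-j-U)]=g_1(x-U)-g_1(x'-U)$, to write the circle and line potentials as the same period-one term plus labelwise displacement corrections. It bounds the correction tails uniformly in $N$ using only second moments (Lemma~\ref{lem:tail}), deletes the actual inside particles including wraparound, and obtains $\gamma_{N,\Lambda}F(\xi_N)\to\gamma_\Lambda F(\xi)$ in $L^1$.

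You instead truncate at a fixed radius $L$, push the $\mu$-almost surely continuous truncated functional through weak convergence, and remove the truncation in the limit by (i). This avoids the coupling and the deletion bookkeeping. The price is a uniform-in-$N$ tail estimate, which your finite-$N$ matching with uniform exponential tails can supply.

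The load-bearing part of that estimate is your step (ii), and you state it too quickly. Write $h_N(r)=g_N(r)-g_N(0)+|r|^a$. Each term $h_N(x-y)-h_N(x_*-y)$ with $|y|$ of order $N$ is itself of order $N^{a-1}$, and there are about $N$ such terms. The image series is only approximately quadratic for $|r|\ll N$, so neither termwise bounds nor its quadratic part settle the question. The correction vanishes only through neutrality. By evenness of $h_N$, its integral against the uniform background on a centered period is $O(N^{a-1})$, and the lattice Riemann-sum error and the displacement error are of the same order in expectation. You should write this out. Alternatively, compare with the lattice and invoke Lemma~\ref{lem:lattice}; that is how the paper makes the neutrality cancellation exact, and why its field convergence needs only second moments.

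Finally, turning $\mu F=\mu\gamma_\Lambda F$ for continuous local $F$ into the conditional statement needs three further ingredients. You need a Borel version of $W_\eta$ on all of $\Conf(\R)$, a monotone-class extension to bounded Borel $F$, and properness of $\gamma_\Lambda$ with respect to $\sigma(\eta,k)$. These are routine but should be stated.
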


The conditioning in this theorem includes the inside count. No
insertion chemical potential is required. The theorem applies to
every periodic accumulation point and does not assert uniqueness
or convergence of the full sequence.

The matching also controls two directly observable quantities. For
$K=C_a/\beta$, define
\[
 V(K)=\bigl(2\sqrt{2K+\sqrt K}+1/2\bigr)^2.
\]
\begin{corollary}[Number fluctuations and Fourier averages]\label{cor:number}
Every process in Theorem~\ref{thm:limit} satisfies
\begin{equation}\label{eq:number-variance}
 \sup_{I\text{ bounded interval}}\Var\xi(I)\le V(K).
\end{equation}
With $A_n=(2n)^{-1}\sum_{y\in\xi\cap[-n,n]}e^{2\pi iy}$,
the variables $A_n$ converge in $L^2$ and
\begin{equation}\label{eq:fourier-bound}
 \lim_{n\to\infty}\E|A_n|^2\ge(1-2\pi^2K)_+^2.
\end{equation}
In particular, the limiting second moment is positive when
$\beta>2\pi^2C_a$.
\end{corollary}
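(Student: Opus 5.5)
The plan is to deduce both bounds from the matching representation~\eqref{eq:matching} and the moment bound $\E q_0^2\le K$ from~\eqref{eq:limit-moments}. Stationarity of $(q_j)$ gives $\E q_j^2\le K$ for every $j$.

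\textbf{Number variance.} Fix a bounded interval $I=[s,s+\ell)$; endpoint conventions do not matter since $\xi$ is simple and stationary. Write
\[
\xi(I)=\#\{j: j+U\in I\}+D,\qquad D=\sum_j\bigl(\mathbf 1_I(j+U+q_j)-\mathbf 1_I(j+U)\bigr).
\]
The lattice count $\#\{j:j+U\in I\}$ takes only the values $\lfloor \ell\rfloor$ and $\lceil\ell\rceil$, so its standard deviation is at most $1/2$. A nonzero term of $D$ requires $j+U$ and $j+U+q_j$ to lie on opposite sides of an endpoint of $I$. Hence
\[
|D|\le \sum_{e\in\{s,s+\ell\}}\sum_j \mathbf 1\{\text{$e$ lies between $j+U$ and $j+U+q_j$}\}=:N_s+N_{s+\ell}.
\]
For each endpoint $e$, use independence of $U$ and $(q_j)$ together with stationarity:
\[
\E N_e=\sum_j\Prob(e\text{ between }j+U,\,j+U+q_j)=\E|q_0|\le\sqrt K .
\]
For the second moment, I expand $N_e^2$ as a sum over pairs $(j,j')$ and bound the indicator of a pair by $\mathbf 1\{|j-j'|\le |q_j|+|q_{j'}|+1\}$ times a single-crossing probability. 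The aim is $\E N_e^2\le 2K+\sqrt K$, and this is the delicate step. One route I would try is $N_e\le \#\{j: |j+U-e|\le |q_j|\}$. Taking expectation over $U$ and ordering the crossing indices should then give $\E N_e^2\le\E q_0^2+\E|q_0|$ up to a factor $2$. This is precisely where the constant $2K+\sqrt K$ comes from. With $\|N_e\|_2\le\sqrt{2K+\sqrt K}$, Minkowski's inequality gives
\[
\sqrt{\Var\xi(I)}\le \|D\|_2+\tfrac12\le 2\sqrt{2K+\sqrt K}+\tfrac12,
\]
and squaring yields $V(K)$.

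\textbf{Fourier averages.} Since $e^{2\pi i(j+U)}=e^{2\pi iU}$, write
\[
A_n=e^{2\pi iU}(2n)^{-1}\sum_{j}e^{2\pi iq_j}+\text{boundary terms}.
\]
The boundary terms count points within $O(|q_j|)$ of $\pm n$. By the bound on $N_e$ above they are $O(1/n)$ in $L^2$. The von Neumann mean ergodic theorem applied to the $\Z$-stationary sequence $e^{2\pi iq_j}$ gives $L^2$ convergence of the averages to $\E[e^{2\pi iq_0}\mid\mathcal I]$, where $\mathcal I$ is the invariant $\sigma$-field. Then
\[
\lim_n\E|A_n|^2=\E\bigl|\E[e^{2\pi iq_0}\mid\mathcal I]\bigr|^2\ge|\E e^{2\pi iq_0}|^2
\]
by Jensen's inequality. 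Finally $\operatorname{Re}e^{2\pi iq}\ge 1-2\pi^2q^2$, so $|\E e^{2\pi iq_0}|\ge(1-2\pi^2K)_+$, which is~\eqref{eq:fourier-bound}.

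The main obstacle is the second-moment bound on the crossing count $N_e$. The ordering constraint $1+q_{j+1}-q_j>0$ may be needed there to avoid double counting.
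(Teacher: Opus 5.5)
\textbf{Verdict: there is a genuine gap.} Your reduction matches the paper's. The discrepancy $\xi(I)-\mathcal L(I)$ is controlled by the crossing counts at the two endpoints, the lattice count has variance at most $1/4$, and Minkowski yields $V(K)$ once $\E N_e^2\le 2K+\sqrt K$. Your Fourier part (swap to the lattice cutoff, mean ergodic theorem, conditional Jensen, $\cos z\ge 1-z^2/2$) is also the paper's, and it only needs $\sup_e\E N_e^2<\infty$.

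The gap is the step you flag as delicate: you leave it unproved, and it is the entire content of the number-variance bound. Your first route cannot work. Expanding $N_e^2$ over pairs $(j,j')$ requires information about the joint law of $(q_j,q_{j'})$, and nothing beyond stationarity and the marginal moment bound is available. So a pair indicator cannot be factored into a single-crossing probability. Your second route (``ordering the crossing indices'') points in the right direction but omits both ingredients that make it work.

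The missing argument is as follows.
\begin{itemize}
\item Order preservation $y_j<y_{j+1}$ forces all crossings at a cut $e$ to have the same direction. A rightward crossing from $\ell_i\le e$ and a leftward one from $\ell_j>e$ would give $i<j$ but $y_i>e\ge y_j$, which is impossible.
\item The crossed labels form a consecutive block. If $i<h<j$ and $i,j$ both cross rightward, then $\ell_h\le e$ and $y_h>y_i>e$, so $h$ crosses too.
\item Hence the $N_e$ crossed lattice points lie on one side of $e$, at distances at least $0,1,\dots,N_e-1$. Each $|q_j|$ is at least the distance of $\ell_j$ from $e$, which gives the deterministic inequality $N_e(N_e-1)/2\le\sum_{j\text{ crossing }e}|q_j|$.
\item You also need the size-biased version of your computation of $\E N_e$. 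Tonelli over the independent uniform phase $U$, together with stationarity of $(q_j)$, gives $\E\sum_{j\text{ crossing }e}|q_j|=\E q_0^2$.
\item Combining these, $\E N_e^2\le 2\E q_0^2+\E N_e\le 2K+\sqrt K$.
\end{itemize}

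Ordering is not needed to avoid double counting, as you suggest. Its role is one-sidedness. Without it, crossings can originate on both sides of $e$, and the sum of the $B$ smallest lattice distances to $e$ is only about $B^2/4$. That gives a bound of order $4K$ rather than $2K+\sqrt K$, which would not reproduce the stated constant $V(K)$ when $K$ is large.
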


The Fourier conclusion is an averaged statement and requires the
displayed temperature restriction. The all-temperature matching
bound and bounded number variance hold without it.

\subsection{Proof organization}

Section~\ref{sec:finite} proves Theorem~\ref{thm:finite} by the
random-conductance comparison. Section~\ref{sec:matching} constructs
the joint matching for every periodic limit and proves
Corollary~\ref{cor:number}. The remaining sections establish the
Gibbs property. We first isolate the kernel estimates and the
reference-lattice cancellation in Section~\ref{sec:potential}.
Section~\ref{sec:intrinsic} identifies the resulting potential with
the ordinary spatial principal value and constructs measurable
canonical kernels. Section~\ref{sec:dlr} proves their convergence
from the finite-circle conditional laws.

\section{Convexity and the resistance bound}
\label{sec:finite}

We first prove Theorem~\ref{thm:finite}.  The main estimate compares
the inverse random Hessian with the effective resistance of a fixed
long-range network.  Log-concavity controls the physical lengths of
arcs, which in turn controls the expected inverse conductances.  A
deterministic flow then transfers this information to the inverse
Hessian appearing in the Brascamp--Lieb inequality.

\subsection{The periodic kernel and cyclic gaps}

We record the justification of \eqref{eq:images}, including its
normalization.  For $M\ge1$, set
\[
 K_{N,M}(r)=-\sum_{\ell=-M}^{M}
                  \bigl(|r+\ell N|^a-|\ell N|^a\bigr).
\]
Taylor's theorem bounds a paired summand, uniformly on each compact
set of $r$, by a constant times $\ell^{a-2}$.  Since $a-2<-1$,
the sequence $K_{N,M}$ converges locally uniformly to the continuous
even function $K_N$ on the right side of \eqref{eq:images}.  Also,
\[
 K_{N,M}(r+N)-K_{N,M}(r)
       =-|r+(M+1)N|^a+|r-MN|^a\longrightarrow0,
\]
locally uniformly in $r$, because $a-1<0$.  Thus $K_N$ is
$N$-periodic and $K_N(0)=0$.

For a positive integer $k$, write $\omega=2\pi k/N$ and
$L_M=(M+1/2)N$.  Translation of the integration cells gives
\begin{align*}
 \frac1N\int_{-N/2}^{N/2}K_{N,M}(r)e^{-i\omega r}\,dr
 &=-\frac2N\int_0^{L_M}t^a\cos(\omega t)\,dt\\
 &=\frac{2a}{N\omega}
                  \int_0^{L_M}t^{a-1}\sin(\omega t)\,dt.
\end{align*}
The integration-by-parts boundary term is zero, since
$\sin(\omega L_M)=0$.  Dirichlet's test gives convergence of the
last integral at infinity, and exponential damping evaluates it as
\[
 \int_0^\infty t^{a-1}\sin(\omega t)\,dt
       =\Gamma(a)\omega^{-a}\sin(\pi a/2).
\]
Consequently the nonzero Fourier coefficients of $K_N$ are
$c_aN^a|k|^{-1-a}$, with exactly the $c_a$ in \eqref{eq:kernel}.
Evenness gives the negative coefficients.  The continuous periodic
functions $K_N$ and $g_N$ therefore differ by a constant; evaluating
at zero proves \eqref{eq:images}.

The same image series may be differentiated twice on compact
subsets of $(0,N)$: its differentiated tails converge locally
uniformly.  This yields
\begin{equation}\label{eq:curvature}
 g_N''(r)=a(1-a)\sum_{\ell\in\Z}|r+\ell N|^{a-2}>0,
                     \qquad 0<r<N.
\end{equation}
In particular, the curvature is obtained from a convergent image
series.  The kernel itself remains finite and continuous at a
collision.

Let
\[
 \mathcal D_N=\{(Y_1,\ldots,Y_{N-1}):
                     0<Y_1<\cdots<Y_{N-1}<N\}
\]
be the anchored chamber.  Define the cyclic gaps by $D_i=Y_{i+1}-Y_i$, where $Y_0=0$ and
$Y_N=N$.  For $0<u<1$, also write
\[
 \mathcal D_{N,u}=\{Y\in\overline{\mathcal D_N}:
                                      D_i\ge u\text{ for all }i\}
\]
and let $\E_u$ denote expectation for the density proportional to
$e^{-\beta H_N}$ restricted to this set.  It has nonempty interior,
and every pair separation stays away from both $0$ and $N$.
The unrestricted expectation is denoted by $\E$.

\begin{lemma}[Convexity and arc moments]\label{lem:convexity}
The anchored Hamiltonian is strictly convex on $\mathcal D_N$,
and the Palm law is log-concave.  Under either the Palm law or any
of the laws $\E_u$, every cyclic gap has mean one.  If $R$ is the
sum of $k$ consecutive cyclic gaps, $1\le k\le N$, then
\begin{equation}\label{eq:arc-moments}
 \E R=k,\qquad \E R^2\le2k^2,\qquad
 \E R^{2-a}\le2^{(2-a)/2}k^{2-a}.
\end{equation}
The same inequalities hold with $\E_u$ in place of $\E$.
\end{lemma}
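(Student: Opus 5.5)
Convexity comes first. On $\mathcal D_N$ every pair separation $r=Y_j-Y_i$ lies in $(0,N)$, and $H_N$ is a sum of the functions $Y\mapsto g_N(Y_j-Y_i)$. Each of these is a convex function of a linear form, since $g_N''>0$ on $(0,N)$ by \eqref{eq:curvature}. The Hessian is therefore $\sum_{i<j}g_N''(Y_j-Y_i)(e_j-e_i)(e_j-e_i)^{\mathsf T}$, with $e_0=0$. This is the weighted Laplacian of the complete graph on $\{0,\dots,N-1\}$ with vertex $0$ grounded. The graph is connected and all weights are positive, so the grounded Laplacian is positive definite and $H_N$ is strictly convex. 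The density $e^{-\beta H_N}\mathbf 1_{\mathcal D_N}$ is then a log-concave function times the indicator of a convex set, hence log-concave. The same reasoning applies to $\mathcal D_{N,u}$, which is convex because it is cut out by the linear inequalities $D_i\ge u$.

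Mean one follows from cyclic symmetry. The law of the gap vector $(D_0,\dots,D_{N-1})$ under the Palm law is invariant under cyclic rotation of the indices. To see this, re-root at particle $1$: the map $Y\mapsto (Y_{i+1}-Y_1)_i$ (mod $N$) preserves $H_N$, because $g_N$ is $N$-periodic and even and $H_N$ is the full pairwise sum over the cyclic configuration. The map is volume preserving on the chamber and sends gap vectors to their cyclic shifts. The constraint $D_i\ge u$ is shift invariant, so the same holds under $\E_u$. Since $\sum_i D_i=N$, every gap has mean one. By the same invariance, the sum $R$ of any $k$ consecutive gaps has mean $k$.

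For the second moment, $R$ is a linear functional of the log-concave vector $Y$, and $R\ge0$. I would invoke the standard reverse-H\"older inequality for nonnegative log-concave variables, $\E R^2\le 2(\E R)^2$. This holds because a one-dimensional marginal of a log-concave law is log-concave, and among log-concave laws on $[0,\infty)$ with fixed mean the ratio $\E R^2/(\E R)^2$ is maximized by the exponential law, where it equals $2$. If the paper prefers a self-contained route, the cleanest is Borell's lemma, or the known extremality of the exponential among log-concave densities on the half-line for $\E R^p/\Gamma(p+1)(\E R)^p$. This gives $\E R^2\le 2k^2$. I expect this citation or justification of the sharp constant $2$ to be the only real obstacle; everything else is bookkeeping. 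Finally, since $0<2-a<2$, Jensen (or Lyapunov's inequality) gives $\E R^{2-a}\le(\E R^2)^{(2-a)/2}\le 2^{(2-a)/2}k^{2-a}$.

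All steps used only log-concavity on a convex domain and cyclic invariance. Both properties were verified for $\E_u$ above, so the same inequalities hold with $\E_u$ in place of $\E$.
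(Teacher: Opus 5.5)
Your proposal is correct and follows the paper's proof step for step: you write the Hessian as a grounded weighted Laplacian, use cyclic re-rooting (which preserves each constraint $D_i\ge u$) for the means, note that the marginal $R$ is log-concave (affine rather than linear when the arc crosses the seam, which is harmless), apply the constant-$2$ moment bound, and finish with Lyapunov's inequality. The only difference is that the paper proves the step you flagged instead of citing it: for $S(t)=\Prob(R>t)$, concavity of $\log S$ with $S(0)=1$ gives $S(x+y)\le S(x)S(y)$, so by Tonelli $\tfrac12\E R^2=\int_0^\infty\!\int_0^\infty S(x+y)\,dx\,dy\le\bigl(\int_0^\infty S(t)\,dt\bigr)^2=k^2$; this uses only that submultiplicativity, and it avoids relying on Borell's lemma in its usual concentration form, which gives reverse H\"older bounds only with a non-sharp constant.
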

\begin{proof}
For a vector $v\in\R^{N-1}$, extended by $v_0=0$,
\begin{equation}\label{eq:hessian-form}
 v^{\mathsf T}\nabla^2H_N(Y)v
   =\sum_{0\le i<j<N}g_N''(Y_j-Y_i)(v_j-v_i)^2.
\end{equation}
Equation~\eqref{eq:curvature} makes this positive unless $v=0$.
The chamber and its truncated versions are convex, so their
Gibbs densities, extended by zero, are log-concave.

Cyclically rotating the gaps is an affine, volume-preserving map
of the gap simplex.  It represents re-rooting at the next particle,
and preserves the pair energy.  It also preserves each condition
$D_i\ge u$.  All gaps therefore have equal means under either
law.  Their sum is $N$, giving mean one and hence $\E R=k$.

An affine marginal of a log-concave measure is log-concave, by the
Pr\'ekopa--Leindler theorem; see, for example,~\cite{BL}.  This
applies to $R$, including when a cyclic arc passes through the
anchored seam.  Its survival function $S(t)=\Prob(R>t)$ is also
log-concave on $[0,\infty)$ and satisfies $S(0)=1$.  Concavity of
$\log S$ implies
\[
 S(x+y)\le S(x)S(y),\qquad x,y\ge0.
\]
Indeed, the increment of a concave function over an interval of
length $y$ decreases as the interval moves to the right; apply
this with initial points $0$ and $x$.  The statement remains true
when a survival probability is zero.  Tonelli's theorem now gives
\[
 \frac12\E R^2
  =\int_0^\infty tS(t)\,dt
  =\int_0^\infty\!\int_0^\infty S(x+y)\,dx\,dy
  \le\left(\int_0^\infty S(t)\,dt\right)^2=k^2.
\]
Since $1<2-a<2$, monotonicity of $L^p$ norms proves the last
inequality in \eqref{eq:arc-moments}.  Degenerate marginals obey
these estimates as well.
\end{proof}

\subsection{The inverse Hessian as a network resistance}

On the vertex set $\{0,\ldots,N-1\}$, give the edge $\{i,j\}$
the random conductance
\[
 c_{ij}(Y)=g_N''(Y_j-Y_i),\qquad i<j.
\]
The full network Laplacian $L_c$ has off-diagonal entries
$-c_{ij}$ and diagonal entries $\sum_{j\ne i}c_{ij}$.
Let $A$ be its restriction obtained by removing row and column
zero.  Equation~\eqref{eq:hessian-form} identifies $A$ exactly
with the Hessian of $H_N$ in $Y_1,\ldots,Y_{N-1}$.

We use the convention that flow energy is summed over unordered
edges.  For an antisymmetric flow $\theta$, this energy is
\[
 \mathcal E_c(\theta)=\sum_{\{v,w\}}
                               \frac{\theta_{vw}^2}{c_{vw}}.
\]
The effective resistance $R_c(0,i)$ is the minimum of this energy
over unit flows from $i$ to $0$.  Reversing the orientation does
not change the minimum.  Thomson's principle~\cite{LP} and the
grounded inverse identity give
\begin{equation}\label{eq:grounded-resistance}
 R_c(0,i)=[A^{-1}]_{ii}.
\end{equation}
For clarity, solve $Av=e_i$ and set $v_0=0$.  The current
$c_{vw}(v_v-v_w)$ is a unit flow from $i$ to $0$, with energy
$v^{\mathsf T}L_cv=v_i$.  Its energy cross term with any
divergence-free flow vanishes by summation by parts.  Every other
unit flow therefore has at least this energy, proving both
statements with the displayed normalization.

Define the cyclic label distance and the deterministic conductances
by
\[
 d_N(i,j)=\min\{|i-j|,N-|i-j|\},\qquad
 w_{ij}=d_N(i,j)^{a-2}.
\]
\begin{lemma}[Expected resistance comparison]\label{lem:network}
There are constants $B_a,D_a<\infty$, depending only on $a$, such
that, uniformly in $N\ge2$ and $0\le i<N$,
\begin{equation}\label{eq:resistance-bound}
 \E R_c(0,i)\le B_aR_w(0,i),\qquad R_w(0,i)\le D_a.
\end{equation}
The first inequality also holds under every $\E_u$.  One may take
\[
 B_a=\frac{2^{(2-a)/2}}{a(1-a)},\qquad
 D_a=\frac{2^{3-a}}{a\eta_a},\qquad
 \eta_a=\min\left\{\frac{3^{2-a}}{24},1-\frac{\sqrt3}{2}\right\}.
\]
\end{lemma}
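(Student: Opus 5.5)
The first inequality in \eqref{eq:resistance-bound} comes from testing Thomson's principle with one deterministic flow. The second comes from diagonalizing the circulant network $w$ by Fourier analysis.

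\textbf{Comparison step.} Fix $i$ and let $\theta$ be the unit current flow from $i$ to $0$ in the deterministic network $w$, so that $\mathcal E_w(\theta)=R_w(0,i)$. Since $\theta$ does not depend on $Y$, Thomson's principle gives, pointwise in $Y$,
\[
 R_c(0,i)\le\mathcal E_c(\theta)=\sum_{\{v,w\}}\theta_{vw}^2\,c_{vw}(Y)^{-1}.
\]
Taking expectations, it suffices to show $\E[c_{vw}^{-1}]\le B_a\,w_{vw}^{-1}$ for every edge.

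To get this edge bound, keep in \eqref{eq:curvature} only the terms $\ell=0$ and $\ell=-1$. With $r=Y_w-Y_v\in(0,N)$ this gives
\[
 c_{vw}\ge a(1-a)\min\{r,N-r\}^{a-2}.
\]
Now $r$ and $N-r$ are the two complementary cyclic arcs between the particles. One of them is a sum of exactly $d=d_N(v,w)$ consecutive cyclic gaps; call it $R$. Then $\min\{r,N-r\}\le R$, and hence
\[
 c_{vw}^{-1}\le R^{2-a}/(a(1-a)).
\]
The last bound in \eqref{eq:arc-moments} gives $\E R^{2-a}\le 2^{(2-a)/2}d^{2-a}=2^{(2-a)/2}w_{vw}^{-1}$, which is the required inequality with $B_a=2^{(2-a)/2}/(a(1-a))$. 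Lemma~\ref{lem:convexity} holds verbatim under $\E_u$, so the same argument gives the $\E_u$ version. The arc through the anchored seam needs no special treatment, because the lemma already covers it.

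\textbf{Resistance of the circulant network.} The Laplacian $L_w$ is circulant, with eigenvalues
\[
 \lambda_k=\sum_{m=1}^{N-1}d_N(0,m)^{a-2}\bigl(1-\cos(2\pi km/N)\bigr),\qquad k\ne0.
\]
The standard formula gives
\[
 R_w(0,i)=\frac1N\sum_{k\ne0}\frac{|1-e^{2\pi iki/N}|^2}{\lambda_k}\le\frac4N\sum_{k\ne0}\lambda_k^{-1}.
\]
By the symmetry $k\mapsto N-k$ it suffices to treat $1\le k\le N/2$. There the plan is a lower bound $\lambda_k\ge\eta_a(k/N)^{1-a}$. Granting it, we get
\[
 \frac4N\sum_{k\ne0}\lambda_k^{-1}\le\frac{8}{\eta_a}\cdot\frac1N\sum_{1\le k\le N/2}\Bigl(\frac Nk\Bigr)^{1-a}.
\]
This is a Riemann sum bounded by $\int_0^{1/2}x^{a-1}\,dx=2^{-a}/a$, which yields $D_a=2^{3-a}/(a\eta_a)$. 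The integrability of $x^{a-1}$ at $0$ is exactly where $a>0$ enters.

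\textbf{Main obstacle: the uniform lower bound on $\lambda_k$.} All terms of $\lambda_k$ are nonnegative, so one may discard all but a well-chosen window. There are two regimes.
\begin{itemize}
\item When $N/k$ is large, restrict to the labels $m$ with $2\pi km/N\in[\pi/6,\pi/2]$. On this window $1-\cos\ge1-\sqrt3/2$. The window contains about $N/(3k)$ integers $m$ with $d_N(0,m)=m\le N/(4k)$. Hence $\lambda_k\gtrsim (N/k)\cdot(N/k)^{a-2}=(N/k)^{a-1}$.
\item When $k$ is comparable to $N$ the window may contain no integer. Then use the single term $m=1$, for which $1-\cos(2\pi k/N)$ is bounded below when $k/N\ge1/12$. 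A short computation bounds it below by $3^{2-a}/24$ times $(k/N)^{1-a}$.
\end{itemize}
These two regimes explain the minimum defining $\eta_a$. The care is in counting lattice points in the window, uniformly in $N\ge2$ and $k$, without losing the constant. If the constants misalign, I would accept a larger $\eta_a^{-1}$, since only finiteness uniform in $N$ matters for the application.
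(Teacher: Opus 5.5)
Your route is the paper's: Thomson's principle tested on the deterministic optimal $w$-flow, the edgewise bound $\E c_{vw}^{-1}\le B_a w_{vw}^{-1}$ via the $(2-a)$-th arc moment, circulant diagonalization, two regimes for $\lambda_k$, and an integral comparison. Your $B_a$ is correct, and so is the passage from $\lambda_k\ge\eta_a(k/N)^{1-a}$ to $D_a=2^{3-a}/(a\eta_a)$. Your second-regime claim is also true: write $\frac{3^{2-a}}{24}x^{1-a}=\frac18(3x)^{1-a}$, and compare with $1-\cos 2\pi x\ge 1-\frac{\sqrt3}{2}$ on $[\frac1{12},\frac13]$ and with $1-\cos 2\pi x\ge\frac32$ on $[\frac13,\frac12]$.

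The gap is in the first regime, and that is where the stated $\eta_a$, hence the stated $D_a$, is decided. The window $2\pi km/N\in[\pi/6,\pi/2]$ is $N/(12k)\le m\le N/(4k)$. Its length is $N/(6k)$, not $N/(3k)$. Uniformly in $k\le N/12$ it is guaranteed only about $N/(12k)$ integers. With $1-\cos\ge1-\frac{\sqrt3}{2}$ and $m^{a-2}\ge(N/(4k))^{a-2}$ you get
\[
 \lambda_k\ge\frac{4^{2-a}}{12}\Bigl(1-\frac{\sqrt3}{2}\Bigr)\Bigl(\frac kN\Bigr)^{1-a}.
\]
This constant is strictly below $\eta_a$ as soon as $a>\log_4(4/3)\approx0.21$. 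At $a=\frac12$ it is about $0.089$, against $\eta_{1/2}\approx0.134$. Even crediting the full $N/(6k)$ lattice points, the constant $\frac{4^{2-a}}{6}(1-\frac{\sqrt3}{2})$ still falls below $\eta_a$ for $a$ near $1$. So your window yields some finite uniform $D_a$, but not the one asserted. Your two regimes do not produce the minimum defining $\eta_a$.

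The fix is to use the window $N/(6k)\le m\le N/(3k)$ for $1\le k\le N/12$:
\begin{itemize}
\item The phase $2\pi km/N$ then lies in $[\pi/3,2\pi/3]$, so $1-\cos\ge\frac12$.
\item The interval has length $N/(6k)\ge2$, so it contains at least $N/(12k)$ integers.
\item All of them are at most $N/3$, so $d_N(0,m)=m$ and $m^{a-2}\ge(N/(3k))^{a-2}$.
\end{itemize}
This gives exactly $\lambda_k\ge\frac{3^{2-a}}{24}(k/N)^{1-a}$. For $N/12\le k\le N/2$, the $m=1$ term alone gives $\lambda_k\ge1-\frac{\sqrt3}{2}\ge(1-\frac{\sqrt3}{2})(k/N)^{1-a}$. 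So the two constants in $\eta_a$ attach to the regimes in the reverse of the way you proposed. With this assignment, which is the paper's, the rest of your argument yields the stated $D_a$.
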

\begin{proof}
For distinct labels $v,w$, choose the cyclic arc consisting of
$d_N(v,w)$ gaps, and let $R_{vw}$ be its physical length.
One of the terms in \eqref{eq:curvature} has argument of absolute
value $R_{vw}$.  Consequently
\[
 c_{vw}\ge a(1-a)R_{vw}^{a-2},\qquad
 \E\frac1{c_{vw}}
       \le\frac{\E R_{vw}^{2-a}}{a(1-a)}
       \le\frac{B_a}{w_{vw}},
\]
by Lemma~\ref{lem:convexity}.  These inequalities also hold with
$\E_u$.

Choose a deterministic energy-minimizing unit flow $\theta$ in the
$w$ network.  It is an admissible flow in each realization of the
$c$ network, so
\[
 \E R_c(0,i)
  \le\sum_{\{v,w\}}\theta_{vw}^2\E\frac1{c_{vw}}
  \le B_a\sum_{\{v,w\}}\frac{\theta_{vw}^2}{w_{vw}}
  =B_aR_w(0,i).
\]
This proves the comparison using only expected reciprocal
conductances.

It remains to bound the resistance of the deterministic network.
Its Laplacian is circulant.  With the orthonormal Fourier vectors
$N^{-1/2}(e^{2\pi imj/N})_{j=0}^{N-1}$, its nonzero eigenvalues are
\begin{equation}\label{eq:cycle-eigenvalues}
 \lambda_m=\sum_{r=1}^{N-1}d_N(0,r)^{a-2}
                         (1-\cos(2\pi mr/N)),
                         \qquad 1\le m<N.
\end{equation}
For $1\le m\le N/12$, restrict this sum to the integers in
\[
 \frac{N}{6m}\le r\le\frac{N}{3m}.
\]
There are at least $N/(12m)$ such integers: the left endpoint is
at least two, and the interval's length is $N/(6m)$.
On this interval $r\le N/2$,
$1-\cos(2\pi mr/N)\ge1/2$, and
$r^{a-2}\ge(N/(3m))^{a-2}$.  Hence
\[
 \lambda_m\ge\frac{3^{2-a}}{24}(m/N)^{1-a}.
\]
For $N/12\le m\le N/2$, the term $r=1$ instead gives
\[
 \lambda_m\ge1-\frac{\sqrt3}{2}
          \ge\left(1-\frac{\sqrt3}{2}\right)(m/N)^{1-a}.
\]
These two estimates cover every $1\le m\le\lfloor N/2\rfloor$,
including $2\le N<12$, when the first range is empty.
Finally, $\lambda_{N-m}=\lambda_m$.

The pseudoinverse of the full Laplacian, evaluated against
$e_i-e_0$, therefore gives
\begin{align*}
 R_w(0,i)
  &=\frac1N\sum_{m=1}^{N-1}
           \frac{|1-e^{2\pi imi/N}|^2}{\lambda_m}\\
  &\le\frac8{\eta_aN}
          \sum_{m=1}^{\lfloor N/2\rfloor}(N/m)^{1-a}
   \le\frac{2^{3-a}}{a\eta_a}.
\end{align*}
The last step uses
$\sum_{m=1}^{M}m^{a-1}\le\int_0^M t^{a-1}\,dt=M^a/a$.
This proves the assertion for all $N$.
\end{proof}

\subsection{Brascamp--Lieb on the anchored chamber}

The Brascamp--Lieb inequality has a classical role in positional
order for the one-dimensional plasma~\cite{BL1975,BL}.  We use its
convex-domain version: if $K$ is a smooth convex body and $V$ is
smooth with positive-definite Hessian on a neighborhood of $K$,
then the probability measure proportional to $e^{-V}\mathbf1_K$
satisfies
\begin{equation}\label{eq:BL-domain}
 \Var(f)\le\E\bigl\langle(\nabla^2V)^{-1}\nabla f,\nabla f\bigr\rangle
\end{equation}
for every $C^1$ function $f$.  In Euclidean dimension at least two,
this is the case with generalized dimension parameter $+\infty$ in
Kolesnikov--Milman~\cite[Theorem~1.2(1), equivalently
Theorem~3.1(1)]{KM}.  There is no boundary condition on $f$.

For completeness, the same inequality on an interval, which is
the case needed when $N=2$, follows directly by integration by
parts.  Write $\rho=Z^{-1}e^{-V}$ on $[b,c]$, set
$h=f-\E f$, and define
\[
 w(x)=\rho(x)^{-1}\int_b^x h(s)\rho(s)\,ds.
\]
Then $w(b)=w(c)=0$ and $h=w'-V'w$.  Integration by parts gives
\[
 \int h^2\rho
   =\int\bigl((w')^2+V''w^2\bigr)\rho,
       \qquad
 \int h^2\rho=-\int f'w\rho.
\]
Weighted Cauchy--Schwarz, using $V''>0$, proves
$\int h^2\rho\le\int(f')^2/V''\,\rho$, namely
\eqref{eq:BL-domain} in dimension one.

Here the collision singularities and the corners of the chamber
can be handled in a fixed order.  Fix $0<u<1$ first.  The function
$H_N$ is smooth on a neighborhood of $\mathcal D_{N,u}$, and its
Hessian is uniformly positive definite there after shrinking that
neighborhood.  To pass \eqref{eq:BL-domain} to this polytope, one
may use the smooth convex sublevel sets
\[
 K_{u,t}=\left\{Y:
             \sum_{i=0}^{N-1}e^{-t(D_i-u)}\le1\right\},
                   \qquad t>\frac{\log N}{1-u}.
\]
They lie in the interior of $\mathcal D_{N,u}$ and have nonempty
interior, since the equal-gap configuration makes the displayed
sum less than one.  The defining function is strictly convex:
the gradients of the affine gaps span $\R^{N-1}$.  Its only
critical point is its minimizer, so its level set at one is smooth.
As $t\to\infty$, the sets increase to the interior of
$\mathcal D_{N,u}$.  At fixed $u$, both $e^{-\beta H_N}$ and the
inverse Hessian are bounded and continuous on this compact
polytope.  Dominated convergence therefore passes
\eqref{eq:BL-domain} to its restricted Gibbs law.  This is also
the convex-domain approximation described in~\cite[Section~3.1.1]{KM}.

\begin{proof}[Proof of the quadratic bound in Theorem~\ref{thm:finite}]
The coordinate $Y_0$ is identically zero.  For $1\le i<N$, apply
\eqref{eq:BL-domain} on $\mathcal D_{N,u}$ to $V=\beta H_N$ and
$f(Y)=Y_i$.  Equations~\eqref{eq:grounded-resistance} and
\eqref{eq:resistance-bound} give
\[
 \Var_u(Y_i)
    \le\beta^{-1}\E_u[A^{-1}]_{ii}
    =\beta^{-1}\E_uR_c(0,i)
    \le\frac{B_aD_a}{\beta}.
\]
Cyclic symmetry gives $\E_uY_i=i$ for every $u$.  The constants
in the last inequality are independent of $u,N$ and $i$.

Now fix $N$ and let $u\downarrow0$.  The potential is continuous
on the compact closed chamber, and every coordinate is bounded
there.  The restricted normalizing constants, first moments and
second moments thus converge to those of the Palm law by dominated
convergence.  This proves \eqref{eq:finite} with
$C_a=B_aD_a$.
\end{proof}

\subsection{Exponential displacement tails}

The last step uses only the already established variance bound
and the log-concavity of each anchored coordinate marginal.

\begin{lemma}\label{lem:exptail}
If $X$ is a real log-concave random variable with mean zero and
$\E X^2\le\sigma^2$ for $\sigma>0$, then
\[
 \Prob(|X|>t)\le2e^{-\kappa t/\sigma},\qquad
 \E e^{\kappa|X|/(2\sigma)}\le3,
                 \qquad t\ge0,\quad \kappa=\frac{\log3}{4}.
\]
\end{lemma}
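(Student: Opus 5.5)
The plan is to prove a one-sided tail bound of the form $\Prob(X>t)\le e^{-\kappa t/\sigma}$, together with the same bound for $-X$, which is again log-concave with mean zero. Adding the two gives the factor $2$. The natural tool is the submultiplicativity argument already used in Lemma~\ref{lem:convexity}, applied to a shifted variable. Let $S(t)=\Prob(X>t)$. Since $X$ is log-concave, $S$ is log-concave on the interval where it is positive. For $s\ge0$ and $x,y\ge0$, concavity of $\log S$ gives
\[
 S(s+x+y)\,S(s)\le S(s+x)\,S(s+y).
\]
The reason is the same as before: the increment of a concave function over a fixed length decreases as the interval moves to the right.

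First, choose a threshold $s=2\sigma$ and bound $S(2\sigma)$ using Chebyshev,
\[
 S(2\sigma)\le\Prob(|X|\ge2\sigma)\le\tfrac14.
\]
Iterating the inequality above with $x=y=2\sigma$, or more simply using $S(t+2\sigma)\le S(t)\,S(2\sigma)/S(0)$, has a problem: $S(0)$ is not $1$. The workable form is the ratio estimate
\[
 \frac{S(t+h)}{S(t)}\le\frac{S(h)}{S(0)},
\]
which holds for $t\ge0$ and follows from concavity of $\log S$. Since $X$ has mean zero and is log-concave, Grünbaum-type bounds give $S(0)=\Prob(X>0)\ge e^{-1}$. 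An elementary alternative is $S(0)\ge1/4$, from a Paley--Zygmund argument using $\E X_+=\E X_-$ and $\E X^2\le\sigma^2$. Taking $h=4\sigma$, Chebyshev gives $S(4\sigma)\le1/16$, so
\[
 \frac{S(t+4\sigma)}{S(t)}\le\frac{1/16}{S(0)}.
\]
The constant $\kappa=\log(3)/4$ suggests that the intended per-step decay is a factor $1/3$ over a length $4\sigma$. Iterating such a step over $t\in[4n\sigma,4(n+1)\sigma)$ gives $S(t)\le 3^{-n}\le 3\,e^{-\kappa t/\sigma}$. The extra factor $3$ then has to be removed, by absorbing the first step into the trivial bound or by a sharper Chebyshev step.

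Once the tail bound is established, the exponential moment follows by integration:
\[
 \E e^{\lambda|X|}=1+\int_0^\infty\lambda e^{\lambda t}\,\Prob(|X|>t)\,dt
 \le1+\frac{2\lambda}{\kappa/\sigma-\lambda}.
\]
With $\lambda=\kappa/(2\sigma)$ the right side equals $1+2=3$, which matches the stated bound exactly. This confirms that the tail bound with prefactor $2$ is the real target.

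\textbf{Main obstacle.} The difficulty is getting the precise constants: a clean factor-$1/3$ decay per length $4\sigma$, with no leftover prefactor. I would first try the following cleaner route. Take the largest $t_0\ge0$ with $S(t_0)\ge1/3$ (ideally $t_0=0$), and estimate the gap $t_0$ relative to $\sigma$ via Chebyshev: $S(t)\le\sigma^2/t^2$, so $S(2\sigma)\le1/4<1/3$. Then apply log-concavity in the form $S(t_0+nh)\le S(t_0)\,(S(t_0+h)/S(t_0))^n$. If the constants do not close, I would fall back to the known sharp fact that a log-concave density with mean zero satisfies $\Prob(X>0)\ge1/e$, and adjust the step length accordingly.
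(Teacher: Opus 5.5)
\textbf{Gap: the constant step is never closed.} As written, your proposal does not prove the lemma. The step you flag as the ``main obstacle'' is exactly where the argument is missing.

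You anchor the log-concavity comparison at $0$, so you need a lower bound on $S(0)=\Prob(X>0)$. Chebyshev's inequality gives no such bound. Your ``elementary'' Paley--Zygmund alternative does not yield $S(0)\ge1/4$ from the stated inputs. It gives $\Prob(X>0)\ge(\E X_+)^2/\E X^2$, and for a centered exponential variable this ratio equals $e^{-2}<1/4$.

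If you import Gr\"unbaum's bound $S(0)\ge e^{-1}$, your scheme can be completed. The step ratio is then $S(4\sigma)/S(0)\le e/16<1/3$. Starting the iteration at $S(4\sigma)\le1/16$ gives $S(t)\le\frac{9}{16}3^{-(n+1)}\le e^{-\kappa t/\sigma}$ on $[4n\sigma,4(n+1)\sigma)$ for $n\ge1$. Two-sided Chebyshev covers $t<4\sigma$. You leave this bookkeeping open and hedge between several alternatives, so there is no complete argument yet.

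\textbf{Missing idea: anchor at a negative point.} Since $\log S$ is concave on the whole line, you can anchor at a negative point, where Chebyshev \emph{does} give a lower bound. Take $S(t)=\Prob(X\ge t)$. Then
\[
S(2\sigma)\le\tfrac14,\qquad S(-2\sigma)=1-\Prob(X<-2\sigma)\ge\tfrac34.
\]
For $t\ge2\sigma$, the secant slope of $\log S$ on $[2\sigma,t]$ is at most the slope on $[-2\sigma,2\sigma]$. That slope is at most $\log(1/3)/(4\sigma)=-\kappa/\sigma$. This is the source of your guessed ``factor $1/3$ over length $4\sigma$'', and hence of $\kappa=\log(3)/4$. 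Therefore
\[
S(t)\le\tfrac14\,e^{-\kappa(t-2\sigma)/\sigma}=\tfrac{\sqrt3}{4}\,e^{-\kappa t/\sigma}\le e^{-\kappa t/\sigma},\qquad t\ge2\sigma .
\]
If $S(2\sigma)=0$ this is trivial. The same argument applies to $-X$, and adding the two tails gives the factor $2$. For $0\le t<2\sigma$ the bound is trivial because $2e^{-\kappa t/\sigma}>2/\sqrt3>1$. Your integration for the exponential moment is then correct as written.

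This is the paper's proof. It is self-contained: it needs no Gr\"unbaum inequality and no iteration.
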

\begin{proof}
A degenerate $X$ equals zero and satisfies both conclusions.
Otherwise its survival function $S(t)=\Prob(X\ge t)$ is
log-concave.  Chebyshev's inequality gives
\[
 S(2\sigma)\le\frac14,\qquad S(-2\sigma)\ge\frac34.
\]
For $t\ge2\sigma$, concavity of $\log S$ bounds the later
secant slope by the slope between $-2\sigma$ and $2\sigma$.
If $S(2\sigma)=0$ the desired bound is immediate; otherwise,
\[
 S(t)\le\frac14
       \exp\left\{-\frac{\log3}{4\sigma}(t-2\sigma)\right\}
       \le e^{-\kappa t/\sigma}.
\]
Apply the same argument to $-X$ and add the two tails.  For
$0\le t<2\sigma$, the bound follows from
$2e^{-\kappa t/\sigma}\ge2/\sqrt3>1$.
Finally, with $\theta=\kappa/(2\sigma)$, Tonelli's theorem gives
\[
 \E e^{\theta|X|}
   =1+\theta\int_0^\infty e^{\theta t}\Prob(|X|>t)\,dt
   \le1+\frac{2\theta}{\kappa/\sigma-\theta}=3.
\]
\end{proof}
By Lemma~\ref{lem:convexity}, each $Y_i-i$ is a log-concave
marginal.  Lemma~\ref{lem:exptail}, with
$\sigma=(C_a/\beta)^{1/2}$, completes the proof of
Theorem~\ref{thm:finite}.

\section{Stationary matching and number fluctuations}\label{sec:matching}

The finite estimate is rooted at a particle, whereas a thermodynamic
limit is stationary in space. We first retain both the displacement
field and its lattice phase while removing the root. This also
prevents particles with remote labels from being lost in a local
limit.

\subsection{Joint extraction and simplicity}

\begin{proof}[Proof of Theorem~\ref{thm:limit}]
For the Palm configuration extend $Y_i$ by $Y_{i+N}=Y_i+N$ and
put $p_i=Y_i-i$, so $p_{i+N}=p_i$. Independently choose a uniform
$J\in\{0,\ldots,N-1\}$ and set
\[
 q_j^{(N)}=p_{j+J},\qquad j\in\Z.
\]
This sequence is stationary under integer shifts. With an independent
uniform $U_N\in[0,1)$, define
\[
 y_j^{(N)}=j+U_N+q_j^{(N)}
          =Y_{j+J}+U_N-J.
\]
As a point set this is the periodic Palm configuration translated by
$U_N-J$. The latter variable is uniform modulo $N$ and independent
of the Palm configuration before the label shift. Undoing the Palm
rooting therefore shows that $\sum_j\delta_{y_j^{(N)}}$ has exactly
law $\mu_N$.

Theorem~\ref{thm:finite} gives, for every $N,j$,
\begin{equation}\label{eq:uniform-q}
 \E|q_j^{(N)}|^2\le K,\qquad
 \Prob(|q_j^{(N)}|>t)\le2e^{-\kappa t/\sigma},\qquad
 \E e^{\kappa|q_j^{(N)}|/(2\sigma)}\le3.
\end{equation}
These inequalities follow by averaging over $J$; no log-concavity
of the resulting mixture is needed. Product tightness gives tightness
of $(U_N,q^{(N)})$ in $[0,1]\times\R^{\Z}$. Its weak limits have
uniform $U$, independent of the integer-stationary field $q$.
The moment inequalities pass by lower semicontinuity. For the tail
bound apply the Portmanteau inequality to the open set
$\{|q_j|>t\}$.

The point-process laws themselves are tight: for every compact
interval $B$, stationarity at intensity one gives
$\E\xi_N(B)=|B|$, and Markov's inequality on an increasing sequence
of compact intervals is the usual tightness criterion for locally
finite counting measures. Start with any convergent point-process
subsequence and take a further joint subsequence of
$(U_N,q^{(N)},\xi_N)$. A Skorokhod representation on this Polish
product space makes all these variables converge almost surely.

We identify the limiting point process. For $M>R+2$, a label
$|j|>M$ can enter $[-R,R]$ only if
$|q_j^{(N)}|\ge |j|-R-1$. Hence
\begin{equation}\label{eq:remote-labels}
 \E\sum_{|j|>M}{\bf1}_{\{y_j^{(N)}\in[-R,R]\}}
 \le K\sum_{|j|>M}(|j|-R-1)^{-2}.
\end{equation}
The right side tends to zero uniformly in $N$, and the same bound
holds for $y_j=j+U+q_j$. In particular the limiting sum
$\sum_j\delta_{y_j}$ is locally finite almost surely. For each
compactly supported continuous test function, first restrict the
sum to $|j|\le M$, where coordinate convergence applies, and then
use~\eqref{eq:remote-labels} on the omitted terms. The joint limit
therefore obeys $\xi=\sum_j\delta_{y_j}$ almost surely. This
identification preserves the originally selected marginal limit.

The finite gaps $1+q_{j+1}^{(N)}-q_j^{(N)}$ are positive, so the
limiting gaps are nonnegative. Their marginal law is the common
cyclic gap law of the Palm ensemble. This law is log-concave by
Lemma~\ref{lem:convexity}, has mean one and second moment at most
two. The second-moment bound gives uniform integrability of the
gaps, so every limiting gap has mean one. Weak limits of log-concave
probability measures are log-concave. In one dimension such a law
is either a point mass or has a log-concave density on its affine
support. In the latter case it has no atom at zero; in the former
case its mean forces the mass to lie at one. Consequently every
limiting gap is strictly positive almost surely. A countable
intersection establishes strict ordering for all labels and hence
simplicity.

For a nonnegative compactly supported function $f$, Tonelli's theorem
and stationarity of the field give
\begin{align*}
 \E\sum_{j\in\Z}f(j+U+q_j)
 &=\sum_{j\in\Z}\int_0^1\E f(j+u+q_0)\,du\\
 &=\E\int_\R f(t+q_0)\,dt
 =\int_\R f(t)\,dt.
\end{align*}
Thus the limiting intensity is exactly one.

Finally translate the matching by $t\in\R$. Write
$U+t=m+U'$ with $m\in\Z$ and $U'\in[0,1)$. Reindexing by
$j'=j+m$ sends the translated matching to the representation with
phase $U'$ and field $q'_{j'}=q_{j'-m}$. Conditional on $U$, the
field is independent and integer-stationary, so the law of $q'$ is
the original law and is independent of $U'$. The latter is again
uniform. This proves joint spatial stationarity.
\end{proof}

\subsection{Crossing currents}

Let $\ell_j=j+U$ and $y_j=\ell_j+q_j$. At a cut $t$ define the
signed crossing count
\begin{equation}\label{eq:current}
 C(t)=\#\{j:\ell_j\le t<y_j\}
       -\#\{j:y_j\le t<\ell_j\}.
\end{equation}
For a deterministic $t$, both sets are finite almost surely. Indeed,
joint stationarity and Tonelli give
\begin{equation}\label{eq:crossing-expectations}
 \E\#\{j:[\ell_j,y_j]\text{ crosses }t\}=\E|q_0|,
 \qquad
 \E\sum_{j:[\ell_j,y_j]\text{ crosses }t}|q_j|=\E q_0^2.
\end{equation}
For example, sum over $j$, integrate the independent phase $U$,
and use the common marginal law of $q_j$. The resulting integral
over all possible lattice origins has length $|q_0|$ for the first
identity and integral $|q_0|^2$ for the second.

Order preservation makes the crossing geometry particularly useful.
A rightward crossing from $\ell_i$ and a leftward crossing from
$\ell_j$ would force $i<j$ and $y_i>y_j$, which is impossible.
Thus all crossings have the same direction; see Figure~\ref{fig:crossings}.
Their lattice endpoints
are consecutive: for a rightward crossing set, if $i<j$ both cross,
then for every $i<h<j$, $\ell_h\le t$ and $y_h\ge y_i>t$.
The leftward case is identical. This is the geometric reason that
a square-integrable displacement controls the square of the crossing
count.

\begin{figure}[t]
\centering
\begin{tikzpicture}[x=1.35cm,y=1.05cm,>=stealth]
 \fill[gray!12] (2,-.35) rectangle (4.5,1.45);
 \draw[gray] (-.3,1) -- (6,1);
 \draw[gray] (-.3,0) -- (6,0);
 \draw[dashed,thick] (3.5,-.48) -- (3.5,1.55)
    node[above] {$t$};
 \foreach \j/\v in {0/.4,1/1.6,2/3.8,3/4.4,4/4.9,5/5.7}{
   \draw[->,gray] (\j,1) -- (\v,0);
   \fill (\j,1) circle (1.6pt);
   \fill (\v,0) circle (1.6pt);
 }
 \draw[->,very thick] (2,1) -- (3.8,0);
 \draw[->,very thick] (3,1) -- (4.4,0);
 \node[left] at (-.35,1) {$\ell_j$};
 \node[left] at (-.35,0) {$y_j$};
 \node[above] at (2,1.02) {$\ell_i$};
 \node[above] at (3,1.02) {$\ell_{i+1}$};
 \node[below,align=center] at (3.15,-.52)
  {crossed lattice endpoints are consecutive};
\end{tikzpicture}
\caption{An ordered matching across a cut. The two thick segments
 contribute with the same sign to $C(t)$; opposite-direction
 crossings are excluded by the ordering of both rows. Crossings
 are also precisely the membership discrepancies between lattice
 and particle cutoffs. For a cutoff $t=\pm L$, logarithmic
 displacements confine their endpoints to a strip of width
 $O(\log L)$ by~\eqref{eq:log-envelope}, as used in the spatial
 principal-value proof.}
\label{fig:crossings}
\end{figure}

\begin{lemma}\label{lem:current}
If $\E q_0^2\le K$, then for every deterministic cut $t$,
\begin{equation}\label{eq:current-bound}
 \E C(t)^2\le 2K+\sqrt K.
\end{equation}
\end{lemma}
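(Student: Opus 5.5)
Fix the cut $t$ and write $n=|C(t)|$ for the number of crossing segments; all crossings have one direction, so $n$ is exactly the number of crossing segments. The plan is to show that $n$ consecutive crossing labels force large displacements, so that $n^2$ is controlled by the sum of $|q_j|$ over crossing labels. That sum has expectation $\E q_0^2$ by \eqref{eq:crossing-expectations}.

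\emph{Deterministic step.} Consider rightward crossings. By the consecutiveness established above, the crossing labels form a block $i,i+1,\ldots,i+n-1$, all with $\ell_j\le t<y_j$. The lattice points $\ell_{i+n-1-r}$, $r=0,\ldots,n-1$, are spaced by one and all lie at or left of $t$. Hence $t-\ell_{i+n-1-r}\ge r$, and since $y_j>t$,
\[
 |q_{i+n-1-r}|=y_{i+n-1-r}-\ell_{i+n-1-r}>r .
\]
Summing over $r$ gives
\[
 \sum_{j\text{ crossing}}|q_j|\ge\sum_{r=0}^{n-1}r=\frac{n(n-1)}2 .
\]
The leftward case is symmetric.

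\emph{Taking expectations.} From the bound above, $n^2\le 2\sum_{j\text{ crossing}}|q_j|+n$. By \eqref{eq:crossing-expectations},
\[
 \E C(t)^2\le 2\,\E q_0^2+\E|q_0|\le 2K+\sqrt K ,
\]
using Cauchy--Schwarz for $\E|q_0|\le(\E q_0^2)^{1/2}$. This is \eqref{eq:current-bound}.

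The main point to verify is \eqref{eq:crossing-expectations}, which the paper states with a sketch. I would confirm it by writing the sum over $j$ and integrating over $U$ uniform and independent. With $q_j\stackrel d=q_0$, the probability that $[j+u,\,j+u+q_j]$ crosses $t$, summed over $j$ and integrated in $u\in[0,1)$, becomes an integral over all origins $s\in\R$ of $\mathbf 1\{t\text{ between } s \text{ and } s+q_0\}$. That integral equals $|q_0|$, and weighting by $|q_j|$ gives $q_0^2$. Finiteness of these expectations also shows that $n$ is almost surely finite, which justifies the block argument. Endpoint conventions ($\le$ versus $<$) affect only null sets, because $U$ has a density.
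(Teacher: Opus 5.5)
Your proposal is correct and follows essentially the same route as the paper. In both, the distinct crossed lattice endpoints give the deterministic bound $n(n-1)/2\le\sum_{j\text{ crossing}}|q_j|$; then \eqref{eq:crossing-expectations} together with $\E|q_0|\le\sqrt K$ yields $2K+\sqrt K$, and your check of \eqref{eq:crossing-expectations} (integrating the independent phase over all lattice origins) is the computation the paper sketches.
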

\begin{proof}
Write $B=|C(t)|$. By the preceding ordering observation, $B$ is
the total number of crossings. The distances of the $B$ crossed
lattice endpoints from $t$ are bounded below, after ordering, by
$0,1,\ldots,B-1$. Each matching displacement is at least the
distance of its endpoint from $t$. Therefore
\[
 \frac{B(B-1)}2
 \le\sum_{j:[\ell_j,y_j]\text{ crosses }t}|q_j|.
\]
Take expectations, use~\eqref{eq:crossing-expectations}, and apply
$\E|q_0|\le\sqrt K$.
\end{proof}

\begin{proof}[Proof of Corollary~\ref{cor:number}]
Let $\mathcal L=\sum_j\delta_{\ell_j}$ be the shifted unit lattice.
For $I=(u,v]$, the contribution of each segment to the change of
membership gives
\begin{equation}\label{eq:interval-current}
 \xi(I)-\mathcal L(I)=C(u)-C(v).
\end{equation}
Both counts have mean $|I|$. If the fractional part of $|I|$ is
$r$, the lattice count takes its two neighboring integer values
with probabilities $1-r$ and $r$, so
$\Var\mathcal L(I)=r(1-r)\le1/4$. Minkowski's inequality and
Lemma~\ref{lem:current} imply
\[
 \|\xi(I)-|I|\|_2
 \le\|C(u)\|_2+\|C(v)\|_2
      +\|\mathcal L(I)-|I|\|_2
 \le2\sqrt{2K+\sqrt K}+\frac12.
\]
Stationarity and intensity one exclude particles at any deterministic
endpoint, so this proves~\eqref{eq:number-variance} for all endpoint
conventions.

For the Fourier average, replace the particle cutoff $[-n,n]$ by
the corresponding lattice cutoff. The number of altered terms is
at most $|C(-n)|+|C(n)|$. Their complex absolute values equal one;
hence the normalized error has $L^2$ norm $O(n^{-1})$. Almost
surely $0<U<1$, so the lattice cutoff consists of labels
$-n,\ldots,n-1$. Since $e^{2\pi iy_j}=e^{2\pi iU}e^{2\pi iq_j}$,
the mean ergodic theorem for the stationary sequence gives
\begin{equation}\label{eq:ergodic-amplitude}
 A_n\longrightarrow
 e^{2\pi iU}\E[e^{2\pi iq_0}\mid\mathcal I_{\Z}]
 \quad\text{in }L^2,
\end{equation}
where $\mathcal I_{\Z}$ is its shift-invariant sigma-field.
Conditional Jensen and $\cos z\ge1-z^2/2$ yield
\begin{align*}
 \lim_n\E|A_n|^2
 &=\E\left|\E[e^{2\pi iq_0}\mid\mathcal I_{\Z}]\right|^2\\
 &\ge\left|\E e^{2\pi iq_0}\right|^2
 \ge(1-2\pi^2K)_+^2.
\end{align*}
This establishes the stated averaged bound without an ergodicity
assumption.
\end{proof}

\section{Uniform relative-potential estimates}\label{sec:potential}

The interaction $g(r)=-|r|^a$ grows at infinity. Individual
particle--exterior energies therefore diverge, but differences of
energies at fixed inside particle number have a finite symmetric
limit. This section gives an estimate that is uniform in the circle
size and in the inside evaluation points. The uniformity allows
the exterior configuration itself to vary in the thermodynamic
limit. Auxiliary constants such as $C(a)$ may change from line to line;
the constant $C_a$ in the main theorems keeps its fixed meaning.

\subsection{Kernel bounds and the reference lattice}

For $r,r'\in\R$ let $d_N(r,r')$ denote their distance on the
circle $\mathbb T_N$.
\begin{lemma}\label{lem:kernel-bounds}
There is a finite constant $b_a$, independent of $N$, such that
\begin{align}
 |g_N(r)-g_N(r')|&\le b_a d_N(r,r')^a
                         \le b_a|r-r'|^a,\label{eq:holder}\\
 0<g_N''(r)&\le b_a d_N(r,0)^{a-2},
                         \qquad r\notin N\Z.\label{eq:curvature-upper}
\end{align}
For $L>0$ and $N\ge4L$,
\begin{equation}\label{eq:local-kernel}
 \sup_{|r|\le L}|g_N(r)-g_N(0)-g(r)|
                    \le b_aL^2N^{a-2}.
\end{equation}
The function $g$ itself satisfies the first inequality with
Euclidean distance and constant one.
\end{lemma}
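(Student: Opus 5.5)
The plan is to work entirely from the image formula \eqref{eq:images} and the curvature series \eqref{eq:curvature}, splitting each function into the local term and the image tail. By periodicity and evenness we may reduce to $r,r'$ in a fundamental window $[-N/2,N/2]$, chosen so that $|r-r'|=d_N(r,r')$. For $g$ itself, subadditivity of $t\mapsto t^a$ gives $\bigl||r|^a-|r'|^a\bigr|\le|r-r'|^a$, which is the final assertion with constant one.

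For \eqref{eq:holder}, write $g_N(r)-g_N(0)=g(r)-T_N(r)$, where
\[
 T_N(r)=\sum_{m\ge1}\bigl(|r+mN|^a+|r-mN|^a-2(mN)^a\bigr).
\]
For $|r|\le N/2$, differentiate term by term. For $m\ge1$ the derivative $a\,\mathrm{sgn}(r+mN)|r+mN|^{a-1}+a\,\mathrm{sgn}(r-mN)|r-mN|^{a-1}$ is a difference of the form $a(|mN+r|^{a-1}-|mN-r|^{a-1})$. The mean value theorem bounds it by $C|r|(mN/2)^{a-2}$. Summing over $m$ gives $|T_N'(r)|\le C(a)|r|N^{a-2}\le C(a)N^{a-1}$. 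Hence on the window,
\[
 |T_N(r)-T_N(r')|\le C(a)N^{a-1}|r-r'|\le C(a)|r-r'|^a,
\]
using $|r-r'|\le N$. Combined with the bound for $g$, this gives \eqref{eq:holder} for $r,r'$ in the window. For points not in a common window, reduce via periodicity: replace $r'$ by $r'+kN$ and, if needed, pass through an intermediate point, at the cost of a factor $2$. The second inequality in \eqref{eq:holder} is just $d_N\le|\cdot|$.

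For \eqref{eq:local-kernel}, the same representation gives $g_N(r)-g_N(0)-g(r)=-T_N(r)$ with $T_N(0)=0$ and $T_N'(0)=0$ by evenness. A second-order Taylor expansion is therefore natural. For $|r|\le L\le N/4$ each paired summand is bounded by $r^2\sup|{\partial^2}|\le C r^2 (mN-L)^{a-2}\le Cr^2(mN/2)^{a-2}$, and summing over $m$ yields $b_aL^2N^{a-2}$. This is the same estimate as in the Taylor step of the image-series justification, now tracked with explicit $N$-dependence.

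For \eqref{eq:curvature-upper}, positivity is \eqref{eq:curvature}. For the upper bound, take $r\in(0,N)$ and let $\delta=d_N(r,0)=\min(r,N-r)\le N/2$. The two terms $\ell=0$ and $\ell=-1$ in \eqref{eq:curvature} are $r^{a-2}$ and $(N-r)^{a-2}$, each at most $\delta^{a-2}$. For the remaining $\ell$, $|r+\ell N|\ge|\ell|N/2$ up to index shift, so their sum is at most $C(a)N^{a-2}\le C(a)\delta^{a-2}$.

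The main obstacle is bookkeeping rather than ideas. The point needing care is the uniformity of the Hölder constant across the seam where $d_N\ne|r-r'|$ in the chosen representatives. I would handle it by always selecting representatives with $|r-r'|=d_N(r,r')$, and allowing $r$ to range over a window of length $3N/2$, which still keeps $|r\pm mN|\ge mN/4$ for the tail derivative bounds. Finally, take $b_a$ to be the maximum of the finitely many constants produced.
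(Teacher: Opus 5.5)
Your proof is correct. The curvature bound and the local estimate \eqref{eq:local-kernel} follow the paper's argument. The paper keeps the nearest image term in \eqref{eq:curvature} and bounds the rest by $C(a)N^{a-2}$. It proves \eqref{eq:local-kernel} by integrating $h_N''=a(1-a)\sum_{m\ne0}|r+mN|^{a-2}\le C(a)N^{a-2}$ twice from $h_N'(0)=0$, where $h_N=g_N+|\cdot|^a=g_N(0)-T_N$ is exactly your tail.

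The genuine difference is the H\"older bound. The paper derives it from the curvature estimate. Evenness and periodicity give $g_N'(N/2)=0$. Integrating $g_N''\le b_a r^{a-2}$ from $r$ to $N/2$ gives $|g_N'(r)|\le C(a)\,d_N(r,0)^{a-1}$. Integrating this along the shortest arc, split at a possible cusp, yields \eqref{eq:holder}.

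You instead split off the exact singularity $-|r|^a$, which is H\"older with constant one by subadditivity. You then show the smooth image tail is Lipschitz with constant $O(N^{a-1})$ on an extended window, and convert this to $C|r-r'|^a$ using $|r-r'|\le N$. Your route makes \eqref{eq:holder} independent of the curvature upper bound and handles \eqref{eq:holder} and \eqref{eq:local-kernel} through the single function $T_N$. The paper's route works intrinsically on the circle, needs no choice of representatives, and gives the pointwise derivative bound as a by-product.

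One small bookkeeping correction: to put both endpoints in $[-3N/4,3N/4]$, place the midpoint of the shortest arc in $[-N/2,N/2)$, not $r$ itself. If $r\in[-N/2,N/2]$ and $|r-r'|\le N/2$, then $r'$ can lie near $\pm N$. Alternatively, your ``intermediate point'' option works: pass through the seam point $N/2\equiv -N/2$ and apply the window estimate on $[-N/2,N/2]$ to each piece. By concavity of $t\mapsto t^a$ this costs only a factor $2^{1-a}$.
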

\begin{proof}
The nearest term in the curvature image series has size
$a(1-a)d_N(r,0)^{a-2}$. The remaining terms have total at most
$C(a)N^{a-2}$, which is bounded by a constant times the nearest
term. This proves~\eqref{eq:curvature-upper}.
Evenness and periodicity imply $g_N'(N/2)=0$.
Integrating the curvature estimate from $r$ to $N/2$ gives
\[
 |g_N'(r)|\le C(a) r^{a-1},\qquad 0<r\le N/2.
\]
The reflected bound holds on the other half-circle. Integrating
along a shortest arc, and splitting it at a possible cusp, proves
the uniform H\"older estimate. The singularity $r^{a-1}$ is
integrable since $a>0$.

The function $h_N(r)=g_N(r)+|r|^a$ is even and twice continuously
differentiable on $(-N,N)$, including zero. By the image formula,
\[
 h_N''(r)=a(1-a)\sum_{m\ne0}|r+mN|^{a-2}
            \le C(a)N^{a-2},\qquad |r|\le N/4.
\]
Since $h_N'(0)=0$, two integrations prove~\eqref{eq:local-kernel}.
Finally $||r|^a-|r'|^a|\le|r-r'|^a$ follows from subadditivity
of $t\mapsto t^a$ on $[0,\infty)$.
\end{proof}

\begin{lemma}[Reference-lattice identity]\label{lem:lattice}
For any block $I_N$ of $N$ consecutive integers and any $U,x,x'\in\R$,
\begin{equation}\label{eq:lattice-cancel}
 \sum_{j\in I_N}
 [g_N(x-j-U)-g_N(x'-j-U)]
            =g_1(x-U)-g_1(x'-U).
\end{equation}
Moreover,
\begin{equation}\label{eq:line-lattice}
 g_1(x-U)-g_1(x'-U)
  =\lim_{m\to\infty}\sum_{j=-m}^m
       [g(x-j-U)-g(x'-j-U)],
\end{equation}
locally uniformly in $(x,x',U)$.
\end{lemma}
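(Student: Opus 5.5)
The plan is to prove the first identity through Fourier series and the second through the image formula~\eqref{eq:images} at $N=1$.

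For~\eqref{eq:lattice-cancel}, $g_N$ is $N$-periodic, so the sum over a block $I_N$ of $N$ consecutive integers is the sum over all residues $j\in\Z/N\Z$. It therefore does not depend on the choice of block. Insert the absolutely convergent series~\eqref{eq:kernel} and sum over $j$ term by term. The sum $\sum_{j=0}^{N-1}e^{-2\pi ikj/N}$ equals $N$ when $N\mid k$ and vanishes otherwise. Hence
\[
 \sum_{j\in I_N}g_N(x-j-U)=c_aN^{a+1}\sum_{l\ne0}\frac{e^{2\pi il(x-U)}}{|lN|^{1+a}}
 =c_a\sum_{l\ne0}\frac{e^{2\pi il(x-U)}}{|l|^{1+a}}=g_1(x-U).
\]
The powers of $N$ cancel exactly, and subtracting the same identity at $x'$ gives~\eqref{eq:lattice-cancel}. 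In fact the identity holds before taking differences. Absolute convergence justifies the interchange of sums.

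For~\eqref{eq:line-lattice}, I would use~\eqref{eq:images} with $N=1$, in the symmetric truncated form $K_{1,M}$ from the start of Section~\ref{sec:finite}:
\[
 K_{1,M}(r)=-\sum_{\ell=-M}^{M}\bigl(|r+\ell|^a-|\ell|^a\bigr).
\]
As shown there, $K_{1,M}\to g_1-g_1(0)$ locally uniformly. Set $r=x-U$ and $r'=x'-U$. Then
\[
 K_{1,m}(r)-K_{1,m}(r')=\sum_{j=-m}^m\bigl[g(x-U+j)-g(x'-U+j)\bigr].
\]
The substitution $j\mapsto-j$ turns this into exactly the symmetric sum $\sum_{j=-m}^m[g(x-j-U)-g(x'-j-U)]$ in~\eqref{eq:line-lattice}, because the index range is symmetric. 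The constants $|\ell|^a$ cancel in the difference. Local uniform convergence in $(x,x',U)$ follows because $(x,x',U)$ ranging over a compact set makes $r,r'$ range over a compact set, where the convergence of $K_{1,M}$ is uniform.

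The only point needing care is the uniformity of the convergence of $K_{1,M}$. It comes from the Taylor bound $O(\ell^{a-2})$ on the paired terms $|r+\ell|^a+|r-\ell|^a-2|\ell|^a$, which holds uniformly for $|r|\le R$ once $\ell>2R$. This was already established in Section~\ref{sec:finite}, so I expect no real obstacle. The main thing to get right is the symmetric pairing of $\pm\ell$, since the unpaired sums diverge.
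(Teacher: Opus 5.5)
Your proposal is correct and follows the paper's proof. The first identity comes from summing the absolutely convergent Fourier series over the $N$ shifts, so that only frequencies divisible by $N$ survive, each with coefficient $c_a|l|^{-1-a}$; the second comes from the $N=1$ image formula with the $\pm j$ pairing. The only cosmetic difference is that you identify the symmetric partial sums directly as $K_{1,m}(x-U)-K_{1,m}(x'-U)$ and cite their locally uniform convergence from the kernel section. The paper instead re-derives summability of the paired differences from an $O(j^{a-2})$ derivative bound before invoking the image formula.
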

\begin{proof}
The absolutely convergent Fourier series~\eqref{eq:kernel} can be
summed over the $N$ integer shifts. Only frequencies divisible by
$N$ remain. Their coefficient is
$Nc_aN^a|Nm|^{-1-a}=c_a|m|^{-1-a}$, giving the period-one
kernel exactly. This proves~\eqref{eq:lattice-cancel}.

To prove~\eqref{eq:line-lattice}, pair the terms with labels $j$
and $-j$. For bounded $t,U$, the derivative in $t$ of
\[
 g(t-j-U)+g(t+j-U)
       =-(j+U-t)^a-(j-U+t)^a
\]
is $O(j^{a-2})$, uniformly on those bounded sets once $j$ is large.
The paired differences at $t=x,x'$ are therefore summable. The
image formula~\eqref{eq:images} for $N=1$, applied at $x-U$ and
$x'-U$ and subtracted, identifies their sum with the right side
of~\eqref{eq:lattice-cancel}.
\end{proof}

The surviving period-one term is generally nonconstant. Retaining
it fixes the normalization of the limiting relative potential and
rules out an additional affine term.

\subsection{The displacement correction}

For $x,x'\in K_R=[-R,R]$, define
\[
 \delta_N(x,x';y)=g_N(x-y)-g_N(x'-y),\qquad
 \delta(x,x';y)=g(x-y)-g(x'-y).
\]
Let $\ell_j=j+U$, $y_j=\ell_j+q_j$, and put
\[
 D_{N,j}(x,x')=\delta_N(x,x';y_j)-\delta_N(x,x';\ell_j),
 \qquad
 D_j(x,x')=\delta(x,x';y_j)-\delta(x,x';\ell_j).
\]
In the periodic case, $q$ is $N$-periodic and $I_N$ denotes a
centered block of $N$ labels, chosen so that
$|j|\le(N+1)/2$ for $j\in I_N$. We write $\|\cdot\|_R$ for
the uniform norm on $K_R^2$.

\begin{lemma}[Uniform correction tail]\label{lem:tail}
Assume $U\in[0,1]$ and $\sup_j\E q_j^2\le K$. For
$M\ge16(R+2)$,
\begin{equation}\label{eq:tail-line}
 \E\sum_{|j|>M}\|D_j\|_R
       \le C_{a,R}\bigl(\sqrt K\,M^{a-1}+K/M\bigr).
\end{equation}
In the periodic case the same bound holds for
$\E\sum_{j\in I_N,\,|j|>M}\|D_{N,j}\|_R$, uniformly in
$N\ge2M$.
\end{lemma}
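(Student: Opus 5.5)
Fix $x,x'\in K_R$ and a label $j$ with $|j|>M$. The plan is to split each $D_j$ according to whether the displacement $q_j$ is small or large compared with $|j|$, use a second-order Taylor estimate in the first case and the H\"older bound in the second, and then sum over $j$.

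\emph{Small displacements.} Suppose $|q_j|\le|j|/2$. Since $M\ge16(R+2)$, both $\ell_j$ and $y_j$ then lie at distance at least $|j|/4$ from $K_R$. Write $\phi(y)=\delta(x,x';y)$, so that $D_j=\phi(y_j)-\phi(\ell_j)$. For $y$ on the segment between $\ell_j$ and $y_j$,
\[
 \phi'(y)=-g'(x-y)+g'(x'-y)=\int_{x'}^{x}g''(s-y)\,ds,
\]
and hence $|\phi'(y)|\le 2R\,a(1-a)(|j|/4)^{a-2}\le C_{a,R}|j|^{a-2}$. Therefore
\[
 \|D_j\|_R\le C_{a,R}|j|^{a-2}|q_j|.
\]
The periodic case is the same. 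For $j\in I_N$ we have $|j|\le(N+1)/2$, so the circle distances from $K_R$ to $\ell_j$ and $y_j$ are still at least of order $|j|$, provided $|q_j|\le|j|/2$ and $N\ge2M$. Lemma~\ref{lem:kernel-bounds}, in the form~\eqref{eq:curvature-upper}, then gives $g_N''\le b_a d_N^{a-2}$, which yields the same derivative bound uniformly in $N$.

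\emph{Large displacements.} Suppose $|q_j|>|j|/2$. The H\"older bound~\eqref{eq:holder}, applied in the $y$ variable to each of the two terms, gives
\[
 \|D_j\|_R\le 2b_a|q_j|^a\le 2b_a|q_j|.
\]
Here we use $|q_j|\ge M/2\ge1$ to replace $|q_j|^a$ by $|q_j|$. Since $|q_j|>|j|/2$, this is further bounded by $C|q_j|^2/|j|$.

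\emph{Summation.} Taking expectations and using $\E|q_j|\le\sqrt K$ and $\E q_j^2\le K$,
\[
 \E\sum_{|j|>M}\|D_j\|_R\le C_{a,R}\sum_{|j|>M}\bigl(\sqrt K|j|^{a-2}+K|j|^{-2}\bigr)\le C_{a,R}\bigl(\sqrt K M^{a-1}+K/M\bigr).
\]
Both sums converge because $a<1$. The periodic estimate follows from the same sum restricted to $j\in I_N$, so it is uniform in $N$.

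\emph{Main obstacle.} The main point is geometric. In the periodic case one must check that the segment from $\ell_j$ to $y_j$ stays uniformly far, in circle distance, from $K_R$ and from its images $K_R+N\Z$, so that the curvature bound applies along the whole segment. This needs $|j|\le(N+1)/2$, $|q_j|\le|j|/2$ and $N\ge2M\gg R$. The segment is then contained in the lifted interval between $|j|/2$ and $3|j|/2$ in absolute value, which is at circle distance at least of order $|j|$ from $K_R$ modulo $N$ once $R$ is small relative to $M$.
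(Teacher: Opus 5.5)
There is a genuine gap, in your large-displacement case. Your small-displacement case is fine: splitting at $|j|/2$ instead of the paper's $|j|/4$ still keeps the segment at distance at least $7|j|/16$ from $K_R$, and on the circle the centered-label condition $N\ge 2|j|-1$ handles the images $K_R\pm N$.

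The problem is the bound you use when $|q_j|>|j|/2$. From $\|D_j\|_R\le 4b_a|q_j|^2/|j|$ on $\{|q_j|>|j|/2\}$ you only get
\[
\E\bigl[\|D_j\|_R\mathbf 1\{|q_j|>|j|/2\}\bigr]\le 4b_aK/|j|,
\]
and $\sum_{|j|>M}|j|^{-1}=\infty$. Your summation line writes $K|j|^{-2}$ instead, and that does not follow from the preceding step. The underlying issue is that the H\"older estimate in the $y$ variable, $2b_a|q_j|^a$, grows with the displacement, so one extra factor $|q_j|/|j|$ is not enough. Even the best use of it, $|q_j|^a\mathbf 1\{|q_j|>|j|/2\}\le(|j|/2)^{a-2}q_j^2$, gives only $C\,K M^{a-1}$. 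That still tends to zero, but it is weaker than the claimed $K/M$.

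The missing observation is that $D_j$ is bounded deterministically, independently of $q_j$, if you apply H\"older in the evaluation points instead. For every $y$, $|\delta(x,x';y)|\le b_a|x-x'|^a\le b_a(2R)^a$. The same holds for $\delta_N$ by \eqref{eq:holder}, since the circle distance is at most $|x-x'|$. Hence $\|D_j\|_R\le 2b_a(2R)^a=C_{a,R}$, whether or not the displaced point enters $K_R$. Chebyshev then gives $\Prob(|q_j|>|j|/2)\le 4K/j^2$. The contribution is therefore $C_{a,R}K/j^2$, which is summable, and the total is $C_{a,R}K/M$. This is exactly how the paper treats the complementary event. With this replacement your argument becomes the paper's proof.
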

\begin{proof}
Split according to $|q_j|\le|j|/4$. On this event every point
of the segment from $\ell_j$ to $y_j$ has Euclidean distance
at least $c|j|$ from $K_R$. For a centered periodic label its
circle distance is also at least $c|j|$: even at the end of
$I_N$, the route through the opposite side of the circle has
length at least a constant multiple of $|j|$.
Two applications of the fundamental theorem of calculus give
\[
 D_{N,j}(x,x')
  =-\int_{\ell_j}^{y_j}\int_{x'}^x g_N''(v-z)\,dv\,dz.
\]
The sign is immaterial for the estimate. By
Lemma~\ref{lem:kernel-bounds},
\[
 \|D_{N,j}\|_R\le C_{a,R}|q_j||j|^{a-2}.
\]
The same calculation uses $g''(r)=a(1-a)|r|^{a-2}$ on the
line. Taking expectations and summing uses
$\E|q_j|\le\sqrt K$ and
$\sum_{j>M}j^{a-2}\le C(a)M^{a-1}$.

On the complementary event, the H\"older bound gives the
deterministic estimate $\|D_{N,j}\|_R\le C_{a,R}$, including
when the displaced point enters $K_R$. Chebyshev gives
$\Prob(|q_j|>|j|/4)\le16K/j^2$.
The sum of these contributions is at most $C_{a,R}K/M$.
\end{proof}

\begin{proposition}\label{prop:full-potential}
Under the full-line assumptions of Lemma~\ref{lem:tail}, the series
\begin{equation}\label{eq:full-line-potential}
 \Phi(x,x')=g_1(x-U)-g_1(x'-U)+\sum_{j\in\Z}D_j(x,x')
\end{equation}
converges almost surely, locally uniformly, and its correction
series converges absolutely in expected uniform norm. It equals
the symmetric label limit
\begin{equation}\label{eq:label-pv}
 \Phi(x,x')=\lim_{m\to\infty}\sum_{j=-m}^m\delta(x,x';y_j).
\end{equation}
For the periodic full potential
$\Phi_N=\sum_{j\in I_N}\delta_N(\cdot,\cdot;y_j)$,
\begin{equation}\label{eq:full-periodic-potential}
 \Phi_N(x,x')=g_1(x-U)-g_1(x'-U)
                    +\sum_{j\in I_N}D_{N,j}(x,x').
\end{equation}
If a coupling has $(U_N,q^{(N)})\to(U,q)$ almost surely in
the product topology and has the uniform second-moment bound
$K$, then, for every fixed $R$,
\begin{equation}\label{eq:full-field-convergence}
 \E\|\Phi_N-\Phi\|_R\longrightarrow0.
\end{equation}
\end{proposition}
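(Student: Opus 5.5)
The plan is to work throughout with the decomposition $\delta(x,x';y_j)=\delta(x,x';\ell_j)+D_j(x,x')$. Each part is handled by an earlier lemma: the lattice part by Lemma~\ref{lem:lattice}, and the correction part by Lemma~\ref{lem:tail}.

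\emph{Convergence of the full-line series.} Fix $R$ and take any $M\ge16(R+2)$. Lemma~\ref{lem:tail} gives $\E\sum_{|j|>M}\|D_j\|_R<\infty$, so $\sum_j\|D_j\|_R<\infty$ almost surely. The finitely many terms with $|j|\le M$ need separate treatment. Each is a continuous function on $K_R^2$ bounded by $2C_{a,R}$ through the H\"older bound of Lemma~\ref{lem:kernel-bounds}. In particular $\|D_j\|_R\le 2(2R)^a$, so these terms are deterministically bounded.

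Hence $\sum_j D_j$ converges absolutely in $\E\|\cdot\|_R$ and almost surely uniformly on $K_R^2$. Running $R$ over the integers and intersecting the corresponding full-measure events gives local uniform convergence, so $\Phi$ is well defined.

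\emph{The symmetric label limit \eqref{eq:label-pv}.} Write
\[
\sum_{|j|\le m}\delta(x,x';y_j)=\sum_{|j|\le m}\delta(x,x';\ell_j)+\sum_{|j|\le m}D_j(x,x').
\]
The first sum converges locally uniformly to $g_1(x-U)-g_1(x'-U)$ by \eqref{eq:line-lattice}. The second converges by the previous step.

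\emph{The periodic identity \eqref{eq:full-periodic-potential}.} This is exact. Summing $\delta_N(\cdot,\cdot;\ell_j)$ over the block $I_N$ gives $g_1(x-U)-g_1(x'-U)$ by \eqref{eq:lattice-cancel}. The remainder is $\sum_{I_N}D_{N,j}$ by definition.

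\emph{The convergence \eqref{eq:full-field-convergence}.} The periodic and line potentials share the same lattice term, so the difference is a difference of correction sums. Fix $\varepsilon>0$ and choose $M$ so large that the tail bound $C_{a,R}(\sqrt K M^{a-1}+K/M)<\varepsilon$. This tail bound holds both for the line and, by Lemma~\ref{lem:tail}, for the periodic sums over $j\in I_N$ with $|j|>M$, uniformly in $N\ge2M$. It therefore remains to show, for each fixed $j$ with $|j|\le M$, that $\E\|D_{N,j}-D_j\|_R\to0$.

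For this we use \eqref{eq:local-kernel} with $L$ large. In $D_{N,j}$ the $g_N(0)$ constants cancel among the four terms, so
\[
\|\delta_N(\cdot,\cdot;y)-\delta(\cdot,\cdot;y)\|_R\le4b_aL^2N^{a-2}
\]
whenever $|y|\le L-R$.

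\emph{Main obstacle: uniformity in $y_j^{(N)}$.} The points $y_j^{(N)}$ are random and may escape the window $|y|\le L-R$. The estimate also requires continuity of $\delta$ in $y$, uniformly in $(x,x')$, which comes from the H\"older bound. We would first fix $L$. On the event $|q^{(N)}_j|\le L/2$, with $N\to\infty$, the kernel error is $O(L^2N^{a-2})\to0$. In addition,
\[
\|\delta(\cdot,\cdot;y^{(N)}_j)-\delta(\cdot,\cdot;y_j)\|_R\le2|y^{(N)}_j-y_j|^a\to0
\]
almost surely, and likewise for the lattice points $\ell_j$. Both terms are bounded by $4(2R)^a$, so bounded convergence applies. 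The complementary event has probability at most $4K/L^2$ by Chebyshev, and on it the integrand is bounded by $4(2R)^a$.

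Letting $N\to\infty$, then $L\to\infty$, then $\varepsilon\to0$ gives \eqref{eq:full-field-convergence}.
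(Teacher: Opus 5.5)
Your argument is essentially the paper's proof. Lemma~\ref{lem:tail} removes the labels $|j|>M$ on the line and in the centered periodic block. The finitely many remaining corrections converge by \eqref{eq:local-kernel}, coordinate convergence, the H\"older bound and bounded convergence, and Lemma~\ref{lem:lattice} gives \eqref{eq:label-pv} and \eqref{eq:full-periodic-potential}.

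There is one slip to fix. Under the coupling, the periodic potential is built from $(U_N,q^{(N)})$, so its lattice term is $g_1(x-U_N)-g_1(x'-U_N)$. It is not the line's $g_1(x-U)-g_1(x'-U)$, as you yourself note for the points $\ell_j$. So the difference $\Phi_N-\Phi$ is not just a difference of correction sums. You must add the step the paper includes: $g_1$ is continuous and $1$-periodic, hence bounded and uniformly continuous. Since $U_N\to U$ almost surely, this term converges uniformly on $K_R^2$, and in expectation by bounded convergence.

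Your Chebyshev window in $L$ is valid but unnecessary. Almost-sure coordinate convergence already keeps each $y_j^{(N)}$ with $|j|\le M$ in a bounded random window, which gives samplewise uniform convergence directly.
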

\begin{proof}
Lemma~\ref{lem:tail} and Tonelli imply almost-sure summability
of the uniform norms in the correction series outside a finite
label set. The finite terms are continuous, including at
collisions. This proves its local uniform convergence; the same
bound gives convergence in expected uniform norm.
Lemma~\ref{lem:lattice} then proves~\eqref{eq:label-pv} and
the exact periodic decomposition~\eqref{eq:full-periodic-potential}.

For convergence, fix a finite label cutoff $M$. The periodic
relative kernels converge locally uniformly to the line kernel
by~\eqref{eq:local-kernel}. Together with coordinate convergence,
this proves uniform convergence of every term with $|j|\le M$
on the given coupling. The H\"older bound bounds their uniform
norms by a deterministic constant depending on $a,R$, so bounded
convergence also gives convergence in expected norm. The
period-one term converges uniformly by continuity and periodicity
of $g_1$. Finally discard the two correction tails using
Lemma~\ref{lem:tail}, first sending $N\to\infty$ and then
$M\to\infty$. This proves~\eqref{eq:full-field-convergence}.
\end{proof}

\section{Spatial principal values and canonical kernels}\label{sec:intrinsic}

The relative potential just constructed uses a lattice matching
for its proof. We now identify it directly from the particle
configuration. Exponential displacement tails are used only for
the unrestricted spatial-cutoff conclusion; the uniform comparison
in Proposition~\ref{prop:full-potential} needed only second moments.

\begin{lemma}[All spatial cutoff radii]\label{lem:spatial-pv}
Let $(y_j)$ be the representation in Theorem~\ref{thm:limit}.
On a single probability-one event, for all finite $R,Z$,
\begin{equation}\label{eq:spatial-pv}
 \sup_{\substack{x,x'\in[-R,R]\\|z|\le Z}}
 \left|\sum_{|y_j-z|\le L}\delta(x,x';y_j)-\Phi(x,x')\right|
     \longrightarrow0
 \quad\text{as real }L\to\infty.
\end{equation}
The limit is unchanged by finite shifts of the increasing
enumeration or by endpoint conventions. Removing a finite set of
particles subtracts exactly its finite relative-potential sum.
\end{lemma}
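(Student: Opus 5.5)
We compare the spatial cutoff with the symmetric label cutoff from Proposition~\ref{prop:full-potential}, whose limit $\Phi$ is already known. Fix $z$ and $L$. The set $\{j:|y_j-z|\le L\}$ differs from the lattice block $\{j:|\ell_j-z|\le L\}$ only by labels whose matching segment $[\ell_j,y_j]$ crosses one of the two cuts $z\pm L$. The lattice block is a set of consecutive labels $\{m_-,\dots,m_+\}$ with $m_\pm=\pm L+z+O(1)$. It is therefore a finite shift of a symmetric block $\{-m,\dots,m\}$, with $m\approx L$ and an asymmetry of at most $|z|+2$ labels.

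The proof has three steps.

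\emph{Step 1 (asymmetric label blocks).} Write the partial sum over $\{m_-,\dots,m_+\}$ as the symmetric sum over $[-m,m]$ plus at most $2Z+4$ boundary terms $\delta(x,x';y_j)$ with $|j|\approx L$. Each such term is $O_R(|y_j|^{a-1})$ once $|y_j|\ge 2R$, by the mean value theorem applied to $g$, and hence tends to zero. The exponential tails of Theorem~\ref{thm:limit}, combined with Borel--Cantelli, give on one null-set-free event
\begin{equation}\label{eq:log-envelope}
 |q_j|\le C\log(2+|j|)\quad\text{for all but finitely many }j,
\end{equation}
with $C$ a multiple of $\sigma/\kappa$. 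Consequently $|y_j|\ge |j|/2$ eventually, and the boundary terms vanish uniformly in $|z|\le Z$ and $x,x'\in[-R,R]$.

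\emph{Step 2 (crossing discrepancies).} The labels crossing the cut $t=z\pm L$ satisfy $|\ell_j-t|\le|q_j|\le C\log(2L)$ by~\eqref{eq:log-envelope}. There are therefore at most $O(\log L)$ of them, and all have $|y_j|\ge L/2$. Each contributes at most $C_R L^{a-1}$ in uniform norm, again by the derivative bound $|g'(r)|\le a|r|^{a-1}$. The total discrepancy is thus $O(L^{a-1}\log L)\to0$, uniformly in $z$, $x$, $x'$. This step is where the exponential tails are genuinely needed: with only second moments, the crossing count at a deterministic cut is controlled by Lemma~\ref{lem:current}, but not simultaneously for all real $L$. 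I expect the main work to lie in arranging~\eqref{eq:log-envelope} on a single event valid for all $R$, $Z$, and real $L$. Because the envelope is a property of the sequence alone, a countable intersection over integer $R$ and $Z$ suffices.

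\emph{Step 3 (remaining claims).} Combining Steps 1 and 2 with~\eqref{eq:label-pv} proves~\eqref{eq:spatial-pv}. A finite shift of the enumeration changes only finitely many boundary terms, and Step 1 shows these are negligible. Endpoint conventions affect only points at distance exactly $L$, and these are covered by Step 2. Finally, removing a finite set $F$ of particles changes every partial sum, once $L$ is large enough that $F$ lies inside the window, by exactly $\sum_{y\in F}\delta(x,x';y)$, so the limit shifts by the same amount.
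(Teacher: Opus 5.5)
Your proposal is correct and follows essentially the paper's proof. Both arguments use the Borel--Cantelli logarithmic envelope for $q_j$, the far-field bound $\|\delta(\cdot,\cdot;y)\|_R\le C_{a,R}|y|^{a-1}$, and an $O(\log L)$ count of membership discrepancies near the cuts $z\pm L$, combined with the symmetric label limit~\eqref{eq:label-pv}. The only differences are cosmetic. The paper compares the spatial window directly with the label window $|j|\le L$ instead of passing through the lattice block. It also first uses the envelope to exclude every $|j|>2L$, which is the step you leave implicit when you write $|q_j|\le C\log(2L)$ for the crossing labels.
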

\begin{proof}
The uniform exponential tail and Borel--Cantelli give a finite
random $B$ and a deterministic $b$ such that
\begin{equation}\label{eq:log-envelope}
 |q_j|\le B+b\log(2+|j|),\qquad j\in\Z.
\end{equation}
Indeed choose $b>\sigma/\kappa$ so that
$\sum_j\Prob(|q_j|>b\log(2+|j|))<\infty$, then absorb the
finitely many exceptions into $B$. In particular $y_j/j\to1$
at either end of the line.

For $|y|\ge2R+1$, direct differentiation gives
\begin{equation}\label{eq:far-delta}
 \|\delta(\cdot,\cdot;y)\|_R\le C_{a,R}|y|^{a-1},\qquad
 \|\partial_y\delta(\cdot,\cdot;y)\|_R
                                  \le C_{a,R}|y|^{a-2}.
\end{equation}
The logarithmic envelope and the second estimate imply
\[
 \|D_j\|_R\le C_{a,R}|q_j||j|^{a-2}
       \le C_{a,R,\omega}\log(2+|j|)|j|^{a-2}
\]
for all sufficiently large $|j|$. The paired reference-lattice
tail is $O_{a,R}(m^{a-1})$ by Lemma~\ref{lem:lattice}.
Consequently the tail in the symmetric label sum is
$O_{a,R,\omega}(m^{a-1}\log m)$.

As illustrated in Figure~\ref{fig:crossings}, compare the membership conditions $|j|\le L$ and
$|y_j-z|\le L$ for $|z|\le Z$. For sufficiently large $L$,
the envelope excludes every $|j|>2L$ from either spatial window:
$|y_j-z|\ge|j|/2>L$. Among the remaining labels set
\[
 b_L=1+Z+\max_{|j|\le2L}|q_j|=O_\omega(\log L).
\]
Every membership discrepancy must satisfy
$\bigl||j|-L\bigr|\le b_L$. There are at most $4b_L+4$
such labels. Their particle positions have $|y_j|\ge L/2$
once $L$ is sufficiently large, uniformly for $|z|\le Z$.
The first estimate in~\eqref{eq:far-delta} bounds the difference
of the two partial sums by
\[
 C_{a,R}(4b_L+4)L^{a-1}
          =O_{a,R,Z,\omega}(L^{a-1}\log L)\longrightarrow0.
\]
This deterministic comparison holds simultaneously for every real
$L$ above a random threshold, rather than only along a selected
sequence. Together with the label-tail bound it proves
\eqref{eq:spatial-pv}. A countable exhaustion in $R,Z$ supplies
the asserted common probability-one event.

Changing the enumeration origin shifts a symmetric label cutoff
by only finitely many labels at its two ends. The corresponding
terms tend to zero uniformly by~\eqref{eq:far-delta}. Endpoint
changes fall within the same boundary strips. Finite deletion
commutes with every sufficiently large spatial sum and therefore
with its locally uniform limit.
\end{proof}

Deleting the particles in $\Lambda$ from Lemma~\ref{lem:spatial-pv}
gives~\eqref{eq:potential}. In particular the potential is an
intrinsic function of the exterior. Its cutoff-center independence
also gives translation covariance. The finite relative-potential
identities pass to the limit; for instance,
\begin{equation}\label{eq:potential-cocycle}
 W_\eta(x;x_*)=W_\eta(x;x_*')+W_\eta(x_*';x_*).
\end{equation}
Changing the reference point adds a constant independent of $x$.
For a fixed inside count this does not affect a normalized Gibbs
density.

The prescription agrees with subtraction of the symmetric uniform
background. Indeed, for $|x|<L$,
\[
 \int_{-L}^L g(x-y)\,dy
  =-\frac{(L+x)^{a+1}+(L-x)^{a+1}}{a+1}.
\]
Its derivative in $x$ is
$-(L+x)^a+(L-x)^a=O_{a,R}(L^{a-1})$ uniformly for
$|x|\le R$. Therefore
\begin{equation}\label{eq:background}
 \sup_{x,x'\in K_R}
 \left|\int_{-L}^L[g(x-y)-g(x'-y)]\,dy\right|
        \longrightarrow0.
\end{equation}
The same holds for a fixed translated center.

\subsection{Measurability and consistency}

For completeness, define the kernel on the full counting-measure
space, not only on a particular choice of matching. Fix $\Lambda$
and $x_*$. Each finite physical sum
\[
 (\eta,x)\longmapsto
 \sum_{y\in\eta\cap[-m,m]}[g(x-y)-g(x_*-y)]
\]
is jointly Borel and continuous in $x$ on $\overline\Lambda$.
As a map into the separable space $C(\overline\Lambda)$ it is
Borel: evaluation on a countable dense set determines its Borel
sigma-field. The set on which these maps form a uniformly Cauchy
sequence is Borel, again by taking suprema on a countable dense
set. On this set use their uniform limit $W_\eta$; on the
complement put $W_\eta=0$. This gives an exterior-measurable
$C(\overline\Lambda)$-valued version which agrees almost surely
with~\eqref{eq:potential} under every limit under consideration.

For $k\ge1$ define
\begin{align}
 E_{\Lambda,k}(x;\eta)
   &=\sum_{i<j}g(x_i-x_j)+\sum_{i=1}^kW_\eta(x_i),
                         \label{eq:canonical-energy}\\
 \gamma_{\Lambda,k}(dx\mid\eta)
   &=\frac{e^{-\beta E_{\Lambda,k}(x;\eta)}}
      {Z_{\Lambda,k}(\eta)}\,dx_1\cdots dx_k,
                         \label{eq:canonical-density}
\end{align}
on $\Lambda^k$. The energy is bounded and continuous on
$\overline\Lambda^k$, so $0<Z_{\Lambda,k}(\eta)<\infty$.
It is symmetric; pushing forward the labeled tuple to its counting
measure gives the unlabeled law. Equivalently one may divide
Lebesgue measure by $k!$, which cancels from the normalizer.
Collision diagonals and endpoints have zero probability.
For $k=0$, use the empty configuration.

For a configuration $\omega$, let $\gamma_\Lambda$ keep
$\omega|_{\Lambda^c}$ and $k=\omega(\Lambda)$ fixed and resample
inside by~\eqref{eq:canonical-density}. Parameter-dependent Borel
integration shows that this is a Borel probability kernel. For
every bounded Borel observable $F$, $\gamma_\Lambda F$ is
measurable with respect to
\begin{equation}\label{eq:canonical-sigma}
 \mathcal C_\Lambda
   =\sigma\{\omega|_{\Lambda^c},\omega(\Lambda)\}.
\end{equation}
It preserves these data pointwise. Consequently, for every bounded
$\mathcal C_\Lambda$-measurable $H$,
\begin{equation}\label{eq:properness}
 \gamma_\Lambda(FH)=H\gamma_\Lambda F.
\end{equation}

On configurations satisfying the spatial principal-value conclusion,
finite changes preserve convergence. If $\Lambda\subset\Delta$,
the finite-deletion identity gives
\[
 W_{\omega|_{\Lambda^c}}(x;x_*)
 =W_{\omega|_{\Delta^c}}(x;x_*)
    +\sum_{y\in\omega\cap(\Delta\setminus\Lambda)}
            [g(x-y)-g(x_*-y)].
\]
Condition the finite-dimensional density in $\Delta$ on its
particles in $\Delta\setminus\Lambda$. Its remaining interaction
with the inside tuple is exactly the energy for $\Lambda$, up to
terms independent of that tuple. This proves
$\gamma_\Delta\gamma_\Lambda=\gamma_\Delta$.
The reverse composition also equals $\gamma_\Delta$, since
$\gamma_\Lambda$ preserves the exterior of $\Delta$ and the
total count in $\Delta$, the two data on which
$\gamma_\Delta$ depends. Thus the kernels are consistent on the
invariant class used by the Gibbs limits. The arbitrary Borel
completion outside this class is irrelevant to their conditional
laws.

\section{Passage from the circle to the line}\label{sec:dlr}

We now prove Theorem~\ref{thm:DLR}. The convergence required here
is convergence of potentials along the random exterior
configurations selected by the Gibbs laws. The expected uniform
tail bound supplies exactly this form of convergence. The final
conditional-law argument uses fixed local observables on the
circle before extending to the entire line.

\subsection{Deleting the actual inside particles}

Fix a weakly convergent subsequence $\mu_N\Rightarrow\mu$ and
the joint coupling constructed in the proof of
Theorem~\ref{thm:limit}. Thus the phases, every displacement
coordinate, and the point configurations converge almost surely.
Fix $\Lambda=(b,c)$ and choose $R$ with
$\overline\Lambda\subset[-R,R]$. Since the limiting process is
stationary of intensity one, it has no particle at either fixed
endpoint almost surely. Vague convergence therefore gives
\begin{equation}\label{eq:count-stability}
 k_N=\xi_N(\Lambda)=\xi(\Lambda)=k
 \quad\text{eventually almost surely},
 \qquad
 \eta_N=\xi_N|_{\Lambda^c}\longrightarrow
 \eta=\xi|_{\Lambda^c}
\end{equation}
vaguely. These facts follow, for example, by placing small
boundary neighborhoods containing no limiting particles and
testing the compact interval between them. Counts are integer
valued and the limit is locally finite.

For $N>2R$, choose the deterministic fundamental cell
$C_N=[-N/2,N/2)$. A centered label $j\in I_N$ has a unique
representative $\widehat y_j^{(N)}$ in $C_N$. Define
\[
 B_N=\{j\in I_N:\widehat y_j^{(N)}\in\Lambda\},
 \qquad
 B=\{j\in\Z:y_j\in\Lambda\}.
\]
The set $B_N$ includes a particle whenever any of its periodic
representatives lies in $\Lambda$, regardless of the position of
the unwrapped coordinate $y_j^{(N)}$.
If $|j|>M\ge16(R+2)$ and $j\in B_N$, then
\[
 |q_j^{(N)}|\ge d_N(j+U_N,\Lambda)\ge c_0|j|
\]
for an absolute $c_0>0$. The first inequality uses the displacement
of a lattice point on the circle; the second uses the centered
label block. It follows that
\begin{equation}\label{eq:inside-remote}
 \E\#\{j\in B_N:|j|>M\}\le C_RK/M.
\end{equation}
The identical bound holds for $B$ on the line. Thus with
probability tending to one as $M\to\infty$, uniformly in large
$N$, all the particles to be deleted have labels in $[-M,M]$.
For those finitely many labels, coordinate convergence and
boundary avoidance stabilize membership in $\Lambda$. Their
unwrapped coordinates are bounded along the coupling, so for
large $N$ they coincide with their representatives in $C_N$.

The full relative potential in the circle is $\Phi_N$ from
Proposition~\ref{prop:full-potential}. The exterior relative
potentials, as functions of two inside evaluation points, are
\begin{align}
 \Phi_N^{\mathrm{out}}(x,x')
   &=\Phi_N(x,x')-
       \sum_{j\in B_N}\delta_N(x,x';y_j^{(N)}),
                    \label{eq:finite-deletion}\\
 \Phi^{\mathrm{out}}(x,x')
   &=\Phi(x,x')-
       \sum_{j\in B}\delta(x,x';y_j).
                    \label{eq:limit-deletion}
\end{align}
The periodic arguments in~\eqref{eq:finite-deletion} make the
choice of representatives immaterial. For fixed $M$ all the
terms retained in the deletion sums converge uniformly, by
Lemma~\ref{lem:kernel-bounds}. Equation~\eqref{eq:inside-remote}
controls the event of any omitted inside label, while
Proposition~\ref{prop:full-potential} controls the full fields.
Consequently
\begin{equation}\label{eq:outside-convergence}
 \|\Phi_N^{\mathrm{out}}-\Phi^{\mathrm{out}}\|_R
                  \longrightarrow0
                  \quad\text{in probability}.
\end{equation}
This argument also accounts for particles entering the window
through periodic wraparound. No moment estimate conditioned on
a prescribed exterior has been used.

Fixing the second argument at $x_*$ identifies the limiting
function with $W_\eta$. The finite function is precisely
\begin{equation}\label{eq:finite-exterior-potential}
 W_{N,\eta_N}(x)=
   \sum_{z\in\xi_N\cap(C_N\setminus\Lambda)}
          [g_N(x-z)-g_N(x_*-z)].
\end{equation}
Thus~\eqref{eq:outside-convergence} is convergence of the true
circle exterior field to the intrinsic full-line field, uniformly
on $\overline\Lambda$, along the joint random coupling.

\subsection{Stability of the normalized conditional laws}

On the sector $k_N=k$, subtract the constant $g_N(0)$ from each
internal pair and define
\begin{equation}\label{eq:finite-energy}
 E_{N,k}(x;\eta_N)
    =\sum_{i<j}[g_N(x_i-x_j)-g_N(0)]
                   +\sum_{i=1}^kW_{N,\eta_N}(x_i).
\end{equation}
This subtraction leaves the circle conditional density unchanged.
For $k\le m$, Lemma~\ref{lem:kernel-bounds} gives
\begin{align}
 \sup_{x\in\overline\Lambda^k}
 |E_{N,k}(x;\eta_N)-E_{\Lambda,k}(x;\eta)|
 &\le \frac{m(m-1)}2
   \sup_{|r|\le2R}|g_N(r)-g_N(0)-g(r)|\notag\\
 &\quad+m\|\Phi_N^{\mathrm{out}}-
                       \Phi^{\mathrm{out}}\|_R
 \longrightarrow0
 \quad\text{in probability}.
                    \label{eq:energy-convergence}
\end{align}
The bound is uniform over each fixed finite collection of count
sectors.

We recall the elementary normalization estimate that makes this
sufficient. If $E,E'$ are two bounded energies on the same finite
reference-measure space and $\|E-E'\|_\infty\le\delta$, then
\[
 e^{-\beta\delta}\le Z'/Z\le e^{\beta\delta},\qquad
 e^{-2\beta\delta}
  \le\frac{e^{-\beta E'}/Z'}{e^{-\beta E}/Z}
  \le e^{2\beta\delta}.
\]
Their normalized probability laws therefore satisfy
\begin{equation}\label{eq:TV}
 \|\gamma'-\gamma\|_{\TV}
   \le\min\{1,e^{2\beta\delta}-1\},
\end{equation}
where total variation is the supremum over measurable events.
If $\delta\to0$ in probability, the right side tends to zero in
expectation because it is bounded by one. In particular no
exponential moment of the random full energy is required.

Let $F$ be a bounded continuous local function on $\Conf(\R)$.
Allowing multiplicities in this ambient space ensures that $F$
is also defined at auxiliary tuples with coincident coordinates.
Write
\[
 T_\Lambda(\omega,x)
       =\omega|_{\Lambda^c}+\sum_{i=1}^k\delta_{x_i}.
\]
For a fixed limiting sample and fixed $k$, the addition map is
continuous, so~\eqref{eq:count-stability} implies
\begin{equation}\label{eq:replacement-continuity}
 \sup_{x\in\overline\Lambda^k}
 |F(T_\Lambda(\xi_N,x))-F(T_\Lambda(\xi,x))|
                         \longrightarrow0.
\end{equation}
One can see the uniformity by taking the product of the compact
closure of the convergent exterior sequence with the compact cube
$\overline\Lambda^k$. Its image under addition is compact, and
$F$ is uniformly continuous on this image.

Denote the circle density from~\eqref{eq:finite-energy} by
$\gamma_{N,\Lambda}$, acting here by the local replacement
$T_\Lambda$. On each fixed count sector, split the difference
of its expectation of $F$ from $\gamma_\Lambda F$ into the
density difference and the replacement-observable difference.
Equations~\eqref{eq:energy-convergence}--\eqref{eq:replacement-continuity}
show that this difference tends to zero in probability. The
expectations of $F$ are bounded by $\|F\|_\infty$, so convergence
also holds in $L^1$. Finally the counts are tight by
\eqref{eq:count-stability}; restricting first to $k\le m$ and
then sending $m\to\infty$ yields
\begin{equation}\label{eq:operator-convergence}
 \gamma_{N,\Lambda}F(\xi_N)
        \longrightarrow\gamma_\Lambda F(\xi)
        \quad\text{in }L^1.
\end{equation}

\subsection{The finite-circle identity and canonical conditioning}

For a fixed local observable $F$, take $N$ sufficiently large
that both its support and $\overline\Lambda$ lie strictly inside
$C_N$. The ordinary finite-dimensional Gibbs conditional law
given the circle particles in $C_N\setminus\Lambda$ and the
inside count has exactly density~\eqref{eq:finite-energy}.
Resampling in the circle changes all the periodically repeated
copies when the configuration is lifted to the line. All changed
copies other than those in $\Lambda$ are outside the fixed
support of $F$ for sufficiently large $N$. Thus the circle Gibbs
identity is exactly
\begin{equation}\label{eq:circle-invariance}
 \E_{\mu_N}F=\E_{\mu_N}\gamma_{N,\Lambda}F
\end{equation}
for the local replacement operator used above. This identity
conditions inside one fundamental circle; the entire exterior
of an infinite periodic lift is not used as finite-volume
conditioning data.

Vague convergence passes the left side of
\eqref{eq:circle-invariance} to $\E_\mu F$, and
\eqref{eq:operator-convergence} passes the right side to
$\E_\mu\gamma_\Lambda F$. Hence
\[
 \mu F=\mu\gamma_\Lambda F
\]
for every bounded continuous local $F$. In particular it holds
for the Laplace cylinders
$F_f(\omega)=\exp(-\int f\,d\omega)$ with
$f\ge0$ continuous and compactly supported. Their finite linear
combinations form a multiplicative algebra generating the vague
Borel sigma-field. The probability measures $\mu$ and
$\mu\gamma_\Lambda$ therefore agree on all Borel sets, by the
functional monotone-class theorem.

We can now use the properness and measurability of the limiting
kernel. For every bounded Borel $F$ and every bounded
$\mathcal C_\Lambda$-measurable $H$,
\[
 \E_\mu[FH]
   =\E_\mu[\gamma_\Lambda(FH)]
   =\E_\mu[H\gamma_\Lambda F],
\]
by~\eqref{eq:properness}. Since $\gamma_\Lambda F$ itself is
$\mathcal C_\Lambda$-measurable, this proves
\begin{equation}\label{eq:DLR-conditional}
 \E_\mu[F\mid\xi|_{\Lambda^c},\xi(\Lambda)]
                    =\gamma_\Lambda F\quad\text{almost surely}.
\end{equation}
The density of this kernel is~\eqref{eq:gibbs}. Its ordinary
spatial principal value, independence of the reference point and
cutoff center, and consistency were proved in
Section~\ref{sec:intrinsic}. The further matched extraction began
from an arbitrary selected periodic weak limit, so this proves
Theorem~\ref{thm:DLR} for every such limit.

\section{Further questions}

The finite-volume displacement estimate and the canonical Gibbs
property leave several questions concerning the organization of
the infinite-volume states. One is whether periodic and
interval-background approximants select the same stationary
limits. This requires an ensemble-comparison argument beyond the
canonical Gibbs equations. Another is whether the periodic laws
converge along the full sequence and how their ergodic components
are classified.

The reciprocal-lattice estimate proves a positive limiting second
moment at low temperature. Determining its behavior throughout
the temperature range requires information beyond the displacement
second moment. The resistance comparison gives a uniform bound
because $|\theta|^{a-1}$ is integrable at the origin for $a>0$;
as $a$ approaches zero, this mechanism reaches its borderline.
Quantifying that endpoint behavior may help relate the present
positional bounds to fluctuations in the logarithmic gas.

\end{document}